\colorlet{Highlight}{blue}
\tikzset{
modal/.style={>=stealth',shorten >=1pt,shorten <=1pt,auto,node distance=1.5cm,semithick},
world/.style={circle,draw,minimum size=1.0cm,fill=gray!15},
world_c/.style={circle,draw,minimum size=1.0cm,fill=green!60!black!40},
point/.style={circle,draw,inner sep=0.5mm,fill=black},
point_b/.style={circle,draw,double,inner sep=0.5mm,fill=white},
point_s/.style={circle,draw,inner sep=0.3mm,fill=black},
reflexive above/.style={->,loop,looseness=7,in=120,out=60},
reflexive below/.style={->,loop,looseness=7,in=240,out=300},
reflexive left/.style={->,loop,looseness=7,in=150,out=210},
reflexive right/.style={->,loop,looseness=7,in=30,out=330},
cross/.style={path picture={ 
  \draw[black]
(path picture bounding box.south east) -- (path picture bounding box.north west) (path picture bounding box.south west) -- (path picture bounding box.north east);
}},
cross_sum/.style={path picture={ 
  \draw[black]
(path picture bounding box.south) -- (path picture bounding box.north) (path picture bounding box.west) -- (path picture bounding box.east);
}}
}
\theoremstyle{plain}
\newtheorem{theorem}{Theorem}
\newtheorem{corollary}{Corollary}
\newtheorem{proposition}{Proposition}
\newtheorem{lemma}{Lemma}
\theoremstyle{definition}
\newtheorem{definition}{Definition}
\newtheorem{assumption}{Assumption}
\theoremstyle{remark}
\newtheorem{remark}{Remark}
\renewcommand\le{\leqslant}
\renewcommand\ge{\geqslant}
\newcommand\eps{\varepsilon}
\newcommand\R{\mathbb{R}}
\newcommand\Z{\mathbb{Z}}
\newcommand\N{\mathbb{N}}
\newcommand\Ss{\mathbb{S}}
\newcommand{\prenorm}[2]{\left\Vert #2 \right\Vert_{#1}}
\newcommand{\norm}[1]{\prenorm{}{#1}}
\newcommand\Sum{\sum\limits}
\newcommand\Int{\int\limits}
\newcommand\Prod{\prod\limits}
\DeclareMathOperator{\trace}{Tr}
\DeclareMathOperator{\real}{Re}
\DeclareMathOperator{\sgn}{sgn}
\DeclareMathOperator{\atan}{atan}
\begin{document}
%
% paper title
% Titles are generally capitalized except for words such as a, an, and, as,
% at, but, by, for, in, nor, of, on, or, the, to and up, which are usually
% not capitalized unless they are the first or last word of the title.
% Linebreaks \\ can be used within to get better formatting as desired.
% Do not put math or special symbols in the title.
\title{A Continuation Method for Large-Scale Modeling and Control: from ODEs to PDE, a Round Trip}
%
%
% author names and IEEE memberships
% note positions of commas and nonbreaking spaces ( ~ ) LaTeX will not break
% a structure at a ~ so this keeps an author's name from being broken across
% two lines.
% use \thanks{} to gain access to the first footnote area
% a separate \thanks must be used for each paragraph as LaTeX2e's \thanks
% was not built to handle multiple paragraphs
%

\author{Denis~Nikitin,
        Carlos Canudas-de-Wit
        and~Paolo~Frasca% <-this % stops a space
\thanks{D. Nikitin, C. Canudas-de-Wit and P. Frasca are with Univ. Grenoble Alpes, CNRS, Inria, Grenoble INP, GIPSA-lab, 38000 Grenoble, France.}% <-this % stops a space
}

\maketitle

% As a general rule, do not put math, special symbols or citations
% in the abstract or keywords.
\begin{abstract}
In this paper we present a continuation method which transforms spatially distributed ODE systems into continuous PDE. We show that this continuation can be performed both for linear and nonlinear systems, including multidimensional, space- and time-varying systems. When applied to a large-scale network, the continuation provides a PDE describing evolution of continuous state approximation that respects the spatial structure of the original ODE. 
%underlying geographical structure. 
Our method is illustrated by multiple examples including transport equations, Kuramoto equations and heat diffusion equations. As a main example, we perform the continuation of a Newtonian system of interacting particles and obtain the Euler equations for compressible fluids, thereby providing an original alternative solution to Hilbert's 6th problem. Finally, we leverage our derivation of the Euler equations to control multiagent systems, designing a nonlinear control algorithm for robot formation based on its continuous approximation.
\end{abstract}

% Note that keywords are not normally used for peerreview papers.
\begin{IEEEkeywords}
Control of Large-Scale Networks, Multiagent Systems, PDE
\end{IEEEkeywords}

% For peer review papers, you can put extra information on the cover
% page as needed:
% \ifCLASSOPTIONpeerreview
% \begin{center} \bfseries EDICS Category: 3-BBND \end{center}
% \fi
%
% For peerreview papers, this IEEEtran command inserts a page break and
% creates the second title. It will be ignored for other modes.
\IEEEpeerreviewmaketitle

\section{Introduction}

\IEEEPARstart{M}{ost} of the systems we encounter in real life consist of such a large number of particles that the direct analysis of their interaction is impossible. In such cases, simplified models are used that aggregate the behavior of a set of particles and replace them with a continuous representation. 

The first model describing a system of moving and interacting particles was the Euler equation for liquids and gases which worked with a continuous field \cite{EULER}. This equation was one of the first \textit{Partial Differential Equations} (PDEs). Since then, many PDEs have been created that are models for describing different physical processes, such as the Maxwell equation for electromagnetic field \cite{MAXWELL}. 

With the advent of computers more and more attention has been paid to the discretization, that is, the process of transforming PDE into a system of ordinary differential equations in order to be able to numerically solve them \cite{DISCRETIZATION}. A correct discretization satisfies the condition that when the size of the resulting ODE system increases, its solution will converge to the original PDE. The choice of an apparent discretization method depends heavily on the PDE type such that the solution would preserve the properties of PDE \cite{CHOICE_DISCRETIZATION}.

Although discretization is a widely known and widely used method, the inverse problem of transforming an ODE system into PDE is more rarely studied. Our work focuses on this particular problem, with the aim of filling this gap and providing a counterpart to the discretization procedure. This can be useful since PDEs provide a much more compact way of describing the system, which in many cases is easier to analyze analytically than the corresponding ODE system.

The idea of replacing the system with its compact and simplified representation is widely used, especially for the ODE systems describing large-scale networks. Probably the most known approach of this type is a \textit{model reduction} technique which transforms a network into a smaller one while conserving the properties and the dynamics, see \cite{AOKI68, UMAR19}. Also the reduction towards the average state was studied in \cite{UMAR20, NIK20}.

Apart from various model reduction techniques large-scale networks are studied by \textit{mean field} methods in case of the \textit{all-to-all} interaction topology. In this situation the effect of the network on each node is the same, therefore it is enough to use an equation for a single agent together with parameters of a state of the whole network, see \cite{ACEBRON05} for a review with application to Kuramoto networks. 

The idea of mean field can be further extended to track not only a single agent's state, but the whole probability distribution over all agents' states in the network. This method is called \textit{population density} approach \cite{POPULATION00} and is mostly used to model large biological neural networks. In other fields of physics it is described as a \textit{probability density} of a physical system's state and is constructed by projection to probablistic space, see \cite{GRA82}.

Large-scale systems can be also simplified by studying the approximations to their probability densities, represented by \textit{moments}. E.g., in \cite{MOMENTS14, MOMENTS20} a moment-based approach was taken to control crowds dynamics. Different applications and issues of the method of moments are covered in \cite{MOMENTS_CL}. Also \textit{shape parameters} can be used to simplify the model and describe a shape of the solution as in \cite{MY_IFAC}.

Mean field and population density approaches are suitable in the case when the interaction topology between nodes is all-to-all. In other cases, the continuous representation of a system requires more sophisticated tools. A recently emerged theory of \textit{graphons} studies graph limits, i.e. structural properties that the graph possesses if the number of nodes tends to infinity while preserving interaction topology, see \cite{GRAPHONS1}. Using graphons it is possible to describe any dense graph as a linear operator in continuum space \cite{GRAPHONS2}. This method was further used to control large-scale linear networks \cite{GRAPHONS3} and to study sensitivity of epidemic networks \cite{GRAPHONS4}.

However, we are interested in systems that are spatially distributed and which have a position-dependent interaction, such as traffic in the city, power networks, robot formations, etc. By applying population density method or graphon theory to such system we would end up with a continuous model which looses the spatial structure of the problem.

Our idea is to replace the original spatially distributed ODE system by a continuous PDE whose state and space variables preserve the state and space variables of the original system. We develop a method for linear spatially invariant ODEs which transform them into PDEs with the help of finite differences. We name this method as a \textit{continuation}, since it is exactly opposite to the discretization procedure. Further we show how the continuation converges to the original system in sense of spectrum. Using computational graph formalism \cite{COMPUTATIONAL_GRAPH} we extend the method to nonlinear systems and further to space-dependent systems and systems with boundaries. 

The continuation method allows to recover a PDE which describes the same physical system as the original ODE network. Such a description can be very helpful both for analysis and control design purposes. Indeed, one can use an obtained PDE to design a continuous control which, being discretized back, results in a control law for the original ODE system: this design framework is illustrated in Fig.~\ref{fig_framework}.

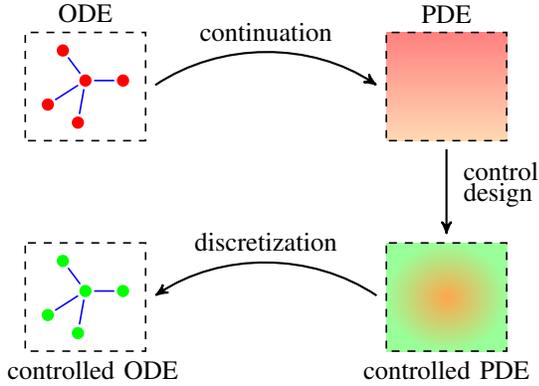
\begin{figure}%[H]
\begin{center}
\begin{tikzpicture}[modal,yscale=0.8]
	
	\node[point,red] (o11) at (0.2, 4) {};
	\node[point,red] (o12) at (0.5, 3.5) {};
	\node[point,red] (o13) at (1, 3.5) {};
	\node[point,red] (o14) at (0, 3.1) {};
	\node[point,red] (o15) at (0.4, 2.8) {};
	\path[-,blue] (o11) edge (o12);
	\path[-,blue] (o12) edge (o13);
	\path[-,blue] (o12) edge (o14);
	\path[-,blue] (o12) edge (o15);
	\draw[dashed] (-0.3,2.5) -- (1.3,2.5) -- (1.3,4.3) -- (-0.3,4.3) -- (-0.3,2.5);
	\node[above] at (0.5,4.3) {ODE};
	\path[thick,->]
(1.4,3.4) edge [out=40,in=140,looseness=1.0] (4.4,3.4);
	\node[above] at (2.9,4) {continuation};
	
	\draw[dashed,shading=axis,top color=red!50,bottom color=orange!30] (4.5,2.5) -- (6.1,2.5) -- (6.1,4.3) -- (4.5,4.3) -- (4.5,2.5);
    %\draw[yellow,shading=radial,inner color=blue,outer color = yellow] (4,4) circle (2.5mm);
    %\draw[yellow,shading=radial,inner color=blue,outer color = yellow] (4.3,3.5) circle (2.5mm);
    %\draw[yellow,shading=radial,inner color=blue,outer color = yellow] (4.8,3.5) circle (2.5mm);
    %\draw[yellow,shading=radial,inner color=blue,outer color = yellow] (3.8,3.1) circle (2.5mm);
    %\draw[yellow,shading=radial,inner color=blue,outer color = yellow] (4.2,2.8) circle (2.5mm);
      
    \node[above] at (5.3,4.3) {PDE};

  \draw[thick,->] (5.3,2.4) -- (5.3,0.9);
  \node[right] at (5.4,2) {control};
  \node[right] at (5.4,1.6) {design};
  
  	\draw[dashed,shading=radial,inner color=orange!70,outer color = green!40] (4.5,0.8) -- (6.1,0.8) -- (6.1,-1) -- (4.5,-1) -- (4.5,0.8);
  	
%    \draw[yellow,shading=radial,inner color=blue,outer color = yellow] (4,0.5) circle (2.5mm);
%    \draw[yellow,shading=radial,inner color=blue,outer color = yellow] (4.3,0) circle (2.5mm);
%    \draw[yellow,shading=radial,inner color=blue,outer color = yellow] (4.8,0) circle (2.5mm);
%    \draw[yellow,shading=radial,inner color=blue,outer color = yellow] (3.8,-0.4) circle (2.5mm);
%    \draw[yellow,shading=radial,inner color=blue,outer color = yellow] (4.2,-0.7) circle (2.5mm);
    
    \node[below] at (5.3,-1) {controlled PDE};

	\path[thick,->]
(4.4,-0.1) edge [out=140,in=40,looseness=1.0] (1.4,-0.1);
	\node[above] at (2.9,0.5) {discretization};

	\node[point,green] (o21) at (0.2, 0.5) {};
	\node[point,green] (o22) at (0.5, 0) {};
	\node[point,green] (o23) at (1, 0) {};
	\node[point,green] (o24) at (0, -0.4) {};
	\node[point,green] (o25) at (0.4, -0.7) {};
	\path[-,blue] (o21) edge (o22);
	\path[-,blue] (o22) edge (o23);
	\path[-,blue] (o22) edge (o24);
	\path[-,blue] (o22) edge (o25);
	\draw[dashed] (-0.3,0.8) -- (1.3,0.8) -- (1.3,-1) -- (-0.3,-1) -- (-0.3,0.8);

	 \node[below] at (0.6,-1) {controlled ODE};

\end{tikzpicture}
\end{center}
\caption{Proposed framework for control design based on the continuation method and a continuous representation of the system.} \label{fig_framework}
\end{figure}

Moving to the examples of the application of the continuation method, we tackle the Hilbert's 6th problem, questioning how one can rigorously transform a system of interacting particles into the Euler PDE. We provide our treatment of this problem using continuation, deriving the Euler PDE from a system of Newton's laws for the case of long-range interaction forces. Further, we show how the method can be applied to the multiagent control, providing a simple control algorithm to stabilize a robotic formation along the desired trajectory, performing a maneuver of passing through a window. The control is derived on a level of a PDE representation and then it is discretized to be implemented on every agent in accordance with the scheme in Fig.~\ref{fig_framework}.

We start in Section II by defining a continuation for linear ODEs, discussing questions of accuracy, convergence and choice of the particular model. Section III continues to nonlinear models, utilizing the computational graph formalism. In Section IV the method is extended to much broader class of systems, including multidimensional or space- and time-varying systems and also discussing boundary conditions. Section V is devoted to a derivation of the Euler PDE from a system of interacting particles. Finally, Section VI applies the method to control a robotic formation.

\section{Method for linear systems} \label{SEC_method_linear}

The simplest class of systems for which the transformation of ODE into PDE can be performed is given by linear ODE systems corresponding to the dynamics of states of nodes, which are aligned on the 1D line in space and depend only on some fixed set of their neighbours.
\begin{figure}%[h]
\begin{center}
\begin{tikzpicture}[modal]
	
	\node[circle,draw,minimum size=1.1cm,fill=green!60!black!40] (x0) {$\rho_i$};
	\node[circle,draw,minimum size=1.1cm,fill=gray!15] (x1) [right=0.3cm of x0] {$\rho_{i+s_2}$};
	\node[circle,draw,minimum size=1.1cm,fill=gray!15] (xm1) [left=0.3cm of x0] {$\rho_{i+s_1}$};
	\node[point_s] (p1) [right=0.2cm of x1] {};
	\node[point_s] (p2) [right=0.4cm of x1] {};
	\node[point_s] (p3) [right=0.6cm of x1] {};
	\node[point_s] (pm1) [left=0.2cm of xm1] {};
	\node[point_s] (pm2) [left=0.4cm of xm1] {};
	\node[point_s] (pm3) [left=0.6cm of xm1] {};
	
	\node[circle,draw,minimum size=1.1cm,fill=gray!15] (x2) [right=0.2cm of p3] {$\rho_{i+s_j}$};
	\node[point_s] (pp1) [right=0.2cm of x2] {};
	\node[point_s] (pp2) [right=0.4cm of x2] {};
	\node[point_s] (ppm3) [right=0.6cm of x2] {};
	
	\path[<-] (x0) edge[in=120,out=60] node[above right]{$a_2$} (x1);
	\path[<-] (x0) edge[in=60,out=120] node[above]{$a_1$} (xm1);
	\path[<-] (x0) edge[in=120,out=75] node[above]{$a_j$} (x2);
	
	\path[->] ($(x0)+(-2.5,-1.1)$) edge node[pos=1,right]{$x$} ($(x0)+(4.5,-1.1)$);
	
	\path[-] ($(x0)+(-1.4,-1.2)$) edge node[below]{$x_{i-1}$} ($(x0)+(-1.4,-1.0)$);
	\path[-] ($(x0)+(0,-1.2)$) edge node[below]{$x_i$} ($(x0)+(0,-1.0)$);
	\path[-] ($(x0)+(1.4,-1.2)$) edge node[below]{$x_{i+1}$} ($(x0)+(1.4,-1.0)$);
	\path[-] ($(x0)+(3.45,-1.2)$) edge node[below]{$x_{i+s_j}$} ($(x0)+(3.45,-1.0)$);
	
	\node at ($(x0)+(-0.7,-0.85)$) {$\Delta x$};
	\node at ($(x0)+(0.7,-0.85)$) {$\Delta x$};
	
\end{tikzpicture}
\end{center}
\vskip -10pt
\caption{System of nodes aligned in 1D line with dynamics given by \eqref{linear_ODE} with $s_1 = -1$ and $s_2 = 1$.} \label{FIG_linear}
\end{figure}
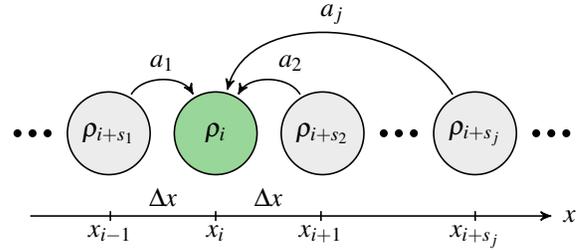
Let the node $i$ have a state $\rho_i \in \R$ and a geographical position $x_i \in \R$ such that for every $i$ the distance between two consecutive nodes in space is constant, $x_{i+1} - x_i = \Delta x$ (the assumption of $\Delta x$ being constant will be relaxed later on).
Then the systems of our interest take the form
\begin{equation} \label{linear_ODE}
\dot\rho_i = \Sum_{j=1}^N a_j \rho_{i+s_j},
\end{equation}
that is $\dot\rho_i$ linearly depends only on $N$ neighbouring nodes shifted by $s_j \in \Z$ for $j \in \{1..N\}$, and $a_j \in \R$ are the system gains, see Fig.~\ref{FIG_linear}. This type of systems belongs to the class of linear \textit{spatially invariant systems} \cite{BASSAM}, which is a natural class for distributed control.

\subsection{Motivating Example}

We start by considering the most simple ODE system of class \eqref{linear_ODE} which has spatial dependence:
\begin{equation} \label{Transport_ODE}
\dot \rho_i = \frac{1}{\Delta x} \left(\rho_{i+1} - \rho_i \right).
\end{equation}
Comparing with \eqref{linear_ODE}, here $N = 2$, $a_1 = 1/\Delta x$, $a_2 = -1/\Delta x$, $s_1 = 1$ and $s_2 = 0$. This equation describes a transport of some quantity along the line, and is usually referred as a Transport ODE. Equation \eqref{Transport_ODE} is sometimes studied on its own, but often it comes as a result of a discretization process applied to another equation,
\begin{equation} \label{Transport_PDE}
\frac{\partial \rho}{\partial t} = \frac{\partial \rho}{\partial x}.
\end{equation}
This equation belongs to a class of PDEs, which is usually thought to be more difficult class of equations to study than ODEs. However, equation \eqref{Transport_PDE} describes a perfect transport of information with finite propagation speed along the line, which can be studied much more easily in PDE form than in ODE, as it perfectly conserves the form of a solution, performing only a shift along the line as time increases. We will refer to this equation as a Transport PDE. 

Equation \eqref{Transport_ODE} can be obtained from \eqref{Transport_PDE} by the discretization process, which has been a well-established mathematical tool for centuries. Nevertheless, up to now there was no strict procedure describing a general process which could render equation \eqref{Transport_PDE} from \eqref{Transport_ODE}. In the next subsections we explore more how the general discretization procedure is defined for linear systems and how it should be inverted to obtain a continuation process.

\subsection{Discretization}

The discretization of PDEs is usually performed by a finite difference method, where the partial derivatives are approximated by finite differences. For example, in the case of Transport ODE, 
$$
\frac{\partial \rho}{\partial x} \approx \frac{1}{\Delta x} \left(\rho_{i+1} - \rho_i \right).
$$
This approximation is valid when $\Delta x$ is small. Indeed, assuming that the solution to PDE is given by a smooth function $\rho(x)$ and using Taylor series, we can write
\begin{equation} \label{Taylor_i_p_1}
\rho_{i+1} = \rho(x_{i+1}) = \rho(x_i) + \frac{\partial \rho}{\partial x} \Delta x + \frac{\partial^2 \rho}{\partial x^2} \frac{\Delta x^2}{2} + ...,
\end{equation} 
where all partial derivatives are calculated in $x_i$. Thus, subtracting $\rho_i$ and dividing by $\Delta x$, we get
\begin{equation} \label{Discretization_FD1}
\frac{\partial \rho}{\partial x} = \left[\frac{1}{\Delta x} \left(\rho_{i+1} - \rho_i \right)\right] - \frac{\partial^2 \rho}{\partial x^2} \frac{\Delta x}{2} - ...,
\end{equation}
which means that the residual belongs to the class $O(\Delta x)$ of all functions which go to zero at least as fast as $\Delta x$. Thus, taking $\Delta x$ sufficiently small, one can ensure the arbitrary accuracy of the approximation, provided all the partial derivatives are bounded. 

Accuracy can be further increased by taking different points where the function is sampled, called stencil points. For example, writing
\begin{equation} \label{Taylor_i_m_1}
\rho_{i-1} = \rho(x_{i-1}) = \rho(x_i) - \frac{\partial \rho}{\partial x} \Delta x + \frac{\partial^2 \rho}{\partial x^2} \frac{\Delta x^2}{2} - ...,
\end{equation} 
subtracting \eqref{Taylor_i_m_1} from \eqref{Taylor_i_p_1} and dividing by $2\Delta x$, we get
\begin{equation} \label{Discretization_FD2}
\frac{\partial \rho}{\partial x} = \left[\frac{1}{2\Delta x} \left(\rho_{i+1} - \rho_{i-1} \right)\right] - \frac{\partial^3 \rho}{\partial x^3} \frac{\Delta x^2}{6} + ....
\end{equation}
Thus, using stencil points $\{i-1, i+1\}$ to approximate the first-order derivative in the point $i$ the obtained residual belongs to the class $O(\Delta x^2)$, which means that this discretization of the Transport PDE is accurate to the second order.

In general, if one wants to approximate the derivative of order $m$ in point $i$ using $N$ stencil points $\{i+s_1, i+s_2, ..., i+s_N\}$ with $m < N$ in form
\begin{equation} \label{Discretization}
\frac{\partial^m \rho}{\partial x^m} \approx \Sum_{j=1}^N \hat a_j \rho_{i+s_j}
\end{equation}
where coefficients $\hat a_j$ are unknown, one can define $S_{N,N} \in \R^{N\times N}$, $\hat a \in \R^N$ and $c \in \R^N$ by
\begin{equation*}
S_{N,N} = \begin{pmatrix}
1 & \cdots & 1 \\
s_1 & \cdots & s_N \\
\vdots & \ddots & \vdots \\
s_1^{N-1} & \cdots & s_N^{N-1}
\end{pmatrix}, \;
\hat a = \begin{pmatrix}
\hat a_1 \\ \hat a_2 \\ \vdots \\ \hat a_N
\end{pmatrix}, \; c = \frac{m!}{\Delta x^m} \begin{pmatrix}
0 \\ \vdots \\ 1 \\ \vdots \\ 0 \end{pmatrix},
\end{equation*}
where $c$ has $1$ on the position $m+1$, and solve a linear system
\begin{equation} \label{Discretization_LS}
\hat a = S_{N,N}^{-1} \: c.
\end{equation}
The system \eqref{Discretization_LS} can be trivially obtained by writing Taylor series for all points $\rho_{i+s_1}...\rho_{i+s_N}$ and summing them in a linear combination as in \eqref{Discretization}. The obtained order of accuracy is at least $O(\Delta x^{(N-m)})$, and sometimes can be higher if some of the higher derivatives are also eliminated (as in case of \eqref{Discretization_FD2}).

\subsection{Continuation} \label{SEC_linear_continuation}

Essentially the same process can be applied to the equation \eqref{linear_ODE} to get the PDE version. For every term in the summation in \eqref{linear_ODE} we can write
\begin{equation} \label{Taylor_i_p_sj}
\rho_{i+s_j} = \rho(x_{i+s_j}) = \rho(x_i) + \frac{\partial \rho}{\partial x} \Delta x s_j + \frac{\partial^2 \rho}{\partial x^2} \frac{\Delta x^2 s_j^2}{2} + ...
\end{equation}
Thus, assume we state the problem of finding the PDE approximation of \eqref{linear_ODE} in form
\begin{equation} \label{linear_combination_continuation}
\Sum_{j=1}^N a_j \rho_{i+s_j} \approx \Sum_{k=0}^d c_k \frac{\Delta x^k}{k!} \frac{\partial^k \rho}{\partial x^k},
\end{equation}
where $d$ is the highest order of derivative (\textit{order of continuation}) we want to use. Note that zero is also included in the right summation, since the function itself can be used in the resulting PDE. Then, introducing $S_{d+1,N} \in \R^{(d+1)\times N}$, $a \in \R^N$ and $c \in \R^{d+1}$ by
\begin{equation} \label{Continuation_matrices}
S_{d+1,N} = \begin{pmatrix}
1 & \cdots & 1 \\
s_1 & \cdots & s_N \\
\vdots  & \ddots & \vdots \\
s_1^d & \cdots & s_N^d
\end{pmatrix}, \;\;\;
a = \begin{pmatrix}
a_1 \\ a_2 \\ \vdots \\ a_N
\end{pmatrix}, \;\;\; c = \begin{pmatrix}
c_0 \\ c_1 \\ \vdots \\ c_d \end{pmatrix},
\end{equation}
the vector of unknown coefficients $c$ can be found by direct multiplication,
\begin{equation} \label{Continuation_LS}
c = S_{d+1,N} \; a, \quad \text{or} \quad c_k = \Sum_{j=1}^N a_j s_j^k \quad \forall k \in \{0,..,d\}.
\end{equation}
Once \eqref{Continuation_LS} is solved, we write the PDE approximation to \eqref{linear_ODE}:
\begin{equation} \label{linear_PDE}
\frac{\partial \rho}{\partial t} = \Sum_{k=0}^d c_k \frac{\Delta x^k}{k!} \frac{\partial^k \rho}{\partial x^k}.
\end{equation}
As an example, applying \eqref{Continuation_LS} to the Transport ODE \eqref{Transport_ODE} renders coefficients $c_0 = 0$ and $c_k = 1/\Delta x$ for all $k > 0$, and choosing $d = 1$ we obtain the Transport PDE \eqref{Transport_PDE}.

\subsection{Accuracy of continuation}

Procedures \eqref{Discretization_LS} and \eqref{Continuation_LS} look very similar from the algebraic point of view, however they are qualitatively different in the way how the problem is formulated and how we should interpret their results.

The discretization procedure tries to find the best approximation to a \textit{continuous and smooth} function $\rho$ and its derivatives. What is most important, the discretization step $\Delta x$ is usually an adjustable parameter which can be set by a system engineer \textit{arbitrarily small} to satisfy the desired performance. Thus the notion of accuracy of a discretization is used to describe how fast the solution of the discretized equation tends to the solution of the original equation when $\Delta x$ tends to zero. In some sense this means quality of the discretization, since the higher order of accuracy means that the engineer can take a larger $\Delta x$ to achieve the same error and thus use the smaller number of states in the discretized system.

Instead, when the original system is given by the ODE, the nodes have fixed locations, thus $\Delta x$ is a \textit{true constant} representing properties of an underlying physical system and it cannot be changed by an engineer. In turn this means that the accuracy defined as a class $O(\Delta x^{(N-m)})$ cannot measure quality of the approximation as $\Delta x$ does not behave as an arbitrarily small value. Moreover, in the ODE case the system state $\rho_i$ is known only on a given set of points $i$, thus in general the approximation $\rho(x)$ can be \textit{non-smooth or discontinuous}. Even if we assume the smoothness at the initial moment of time, the dynamics can render its derivatives unbounded. As a result the series in \eqref{Taylor_i_p_sj} can be non-convergent.

We know however that the systems \eqref{linear_ODE} and \eqref{linear_PDE} are connected by finite difference methods \eqref{Discretization_LS} and \eqref{Continuation_LS}.  We will use this fact as a definition of a PDE approximation to an ODE and of an accuracy of this approximation.

\begin{definition}
Discretization of PDE to ODE is called \textit{valid} if it is performed according to the finite difference method \eqref{Discretization_LS}.
\end{definition}

\begin{definition}
Continuation of ODE to PDE is called \textit{valid} if there exists a valid discretization of the obtained PDE to the original ODE.
\end{definition}

Definitions 1 and 2 say that if we use continuation on some ODE and then perform discretization at the same stencil points, we should arrive at the same ODE. This procedure sets a constraint on the minimum order of the PDE:
\begin{theorem}
The PDE \eqref{linear_PDE} that is obtained from the ODE \eqref{linear_ODE} with $N$ stencil points is valid if and only if
\begin{equation} \label{linear_constraint}
d + 1 \ge N.
\end{equation}
\end{theorem}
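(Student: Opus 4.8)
The plan is to unfold the two definitions and reduce validity to a single linear‑algebraic identity. By Definition~2 the continuation \eqref{linear_PDE} is valid precisely when the finite‑difference discretization \eqref{Discretization_LS} of that PDE, taken at the \emph{same} stencil $\{s_1,\dots,s_N\}$, returns the original gains, i.e.\ $\hat a = a$. So first I would write down what that discretization produces. Expanding $\Sum_{j=1}^N \hat a_j \rho_{i+s_j}$ by the Taylor series \eqref{Taylor_i_p_sj} gives $\Sum_{k\ge 0}\big(\Sum_{j} \hat a_j s_j^k\big)\frac{\Delta x^k}{k!}\partial_x^k\rho$, so matching this against the PDE right‑hand side $\Sum_{k=0}^d c_k\frac{\Delta x^k}{k!}\partial_x^k\rho$ order by order, and spending the $N$ available degrees of freedom on the lowest $N$ orders $k=0,\dots,N-1$, yields the square system $S_{N,N}\,\hat a = \tilde c$, where $\tilde c_k = c_k$ for $k\le\min(d,N-1)$ and $\tilde c_k=0$ for $d<k\le N-1$. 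The structural fact I would invoke throughout is that $S_{N,N}$ is a Vandermonde matrix on the distinct integer nodes $s_1,\dots,s_N$, hence invertible, so this $\hat a$ is unique.

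For sufficiency, suppose $d+1\ge N$. Then $N-1\le d$, so the orders $k=0,\dots,N-1$ are all genuine terms of the PDE and $\tilde c = (c_0,\dots,c_{N-1})^{\top}$ with no padding zeros. Since the continuation coefficients satisfy $c_k=\Sum_{j=1}^N a_j s_j^k$ for every $k\le d$ by \eqref{Continuation_LS}, and the first $N$ rows of $S_{d+1,N}$ from \eqref{Continuation_matrices} are exactly $S_{N,N}$, the vector $a$ already solves $S_{N,N}a = (c_0,\dots,c_{N-1})^{\top}$. By invertibility this is the unique solution, so $\hat a = a$ and the discretization reproduces \eqref{linear_ODE}; the continuation is valid.

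For necessity I would argue by contraposition, reading the constraint through the rank of the continuation map $a\mapsto c=S_{d+1,N}\,a$ of \eqref{Continuation_LS}. If $d+1<N$ then $S_{d+1,N}$ is wide with rank at most $d+1<N$, so it has a nontrivial kernel: distinct gain vectors produce identical PDEs, and the PDE can no longer determine $a$. Concretely, the discretization above now forces the spurious conditions $\Sum_{j} \hat a_j s_j^k = 0$ for the orders $d<k\le N-1$, whereas the true gains give $\Sum_{j} a_j s_j^k$, the higher moments that truncation at order $d$ discarded and which are nonzero for all $a$ outside a proper subspace. Hence $\hat a\neq a$ and no valid discretization recovers \eqref{linear_ODE}, which is exactly the failure of \eqref{linear_constraint}.

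The step I expect to be the main obstacle — and the one I would state carefully before computing — is pinning down what ``discretize the multi‑term PDE'' should mean, since \eqref{Discretization_LS} is written for a single derivative of order $m$: one must justify that the natural joint discretization matches exactly the lowest $N$ Taylor orders, so that the number of conditions equals the number of stencil points and $S_{N,N}$ is square and invertible, and must treat the terms of order $k>N-1$ (present only when $d+1>N$) as part of the $O(\Delta x^N)$ residual rather than as constraints. Everything else is the bookkeeping of Vandermonde systems, for which the distinctness of the integer shifts $s_j$ is the only hypothesis genuinely needed.
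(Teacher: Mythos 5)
Your necessity direction and your treatment of the case $d+1=N$ are sound and essentially coincide with the paper's own proof: zero-padding the coefficient vector and inverting the square Vandermonde system is exactly the paper's argument, with your ``spurious moment conditions'' $\sum_{j}\hat a_j s_j^k=0$ for $d<k\le N-1$ playing the role of the paper's observation that $S_{N,N}^{-1}\cdot(\text{zero-padded }S_{d+1,N})=I$ is impossible because the padded matrix is singular; your genericity qualifier (``for all $a$ outside a proper subspace'') matches the paper's quantification that the recovery ``should be true for any $a$.''

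The genuine gap is in sufficiency when $d+1>N$, precisely at the step you flagged as the main obstacle. Your resolution --- discretize on the same $N$ stencil points, match only the orders $k=0,\dots,N-1$, and treat the terms $c_k\frac{\Delta x^k}{k!}\frac{\partial^k\rho}{\partial x^k}$ with $N\le k\le d$ as $O(\Delta x^N)$ residual --- does not produce a valid discretization in the sense of Definition~1. The finite-difference method \eqref{Discretization_LS} requires every derivative order $m$ appearing in the PDE to satisfy $m<$ (number of stencil points), and the paper's proof states explicitly that \eqref{Discretization_LS} ``cannot be applied directly'' in this regime because there are not enough stencil points for the derivatives of order $\ge N$. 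What your construction actually discretizes is the truncation of \eqref{linear_PDE} at order $N-1$, i.e.\ the order-$(N-1)$ continuation, so you have verified Definition~2 for a different PDE than the one in the statement; validity of the order-$d$ continuation is left unproven. The paper closes this case differently: it augments the stencil set with $d+1-N$ arbitrary extra points $\bar s_{N+1},\dots,\bar s_{d+1}$, discretizes the \emph{full} order-$d$ PDE on the enlarged set, and uses the fact that the columns of $S_{d+1,N}$ are the first $N$ columns of the enlarged square Vandermonde matrix to get $\bar a = S_{d+1,d+1}^{-1}S_{d+1,N}\,a=(a_1,\dots,a_N,0,\dots,0)^T$ irrespective of the choice of the $\bar s_j$: the artificial points receive zero coefficients, so the discretization returns exactly the original ODE \eqref{linear_ODE}. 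Replacing your truncation step by this stencil-augmentation argument repairs your proof, and the Vandermonde column-selection identity you already rely on does all the work.
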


\noindent\textit{Proof of necessity: }
Indeed, assume $d + 1 < N$. This means that the PDE approximation \eqref{linear_PDE} has the highest derivative at most of the order $d$. Thus we can augment the vector of coefficients $c$ by $N - d - 1$ zeros corresponding to higher-order derivatives, which is equivalent to the augmentation of $S_{d+1,N}$ with $N - d - 1$ zero rows since $c$ was defined by \eqref{Continuation_LS}. Now, applying the discretization process \eqref{Discretization_LS} to $c$ we should arrive at the same vector $a$ of the parameters of the ODE system. Since this should be true for any $a$, we substitute $S_{N,N}^{-1}$ and augmented $S_{d+1,N}$ and obtain a condition
$$
\footnotesize{
\begin{pmatrix}
1 & 1 & \cdots & 1 \\
s_1 & s_2 & \cdots & s_N \\[-1mm]
\vdots & \vdots & \ddots & \vdots \\
s_1^{N-1} & s_2^{N-1} & \cdots & s_N^{N-1}
\end{pmatrix}^{-1} \begin{pmatrix}
1 & 1 & \cdots & 1 \\
s_1 & s_2 & \cdots & s_N \\[-2mm]
\vdots & \vdots & \ddots & \vdots \\[-0.5mm]
s_1^{d} & s_2^{d} & \cdots & s_N^{d} \\
0 & 0 & \cdots & 0 \\[-2mm]
\vdots & \vdots & \ddots & \vdots
\end{pmatrix} = I,}
$$
which is impossible to satisfy since the second matrix is singular. Therefore there is no valid discretization process for the PDE obtained by continuation with order $d$ such that $d + 1 < N$, which by definition means that such continuation is not valid.

\noindent\textit{Proof of sufficiency: }
Case $d + 1 = N$ is trivial, since the equations \eqref{Discretization_LS} and \eqref{Continuation_LS} are equivalent in this situation. This obviously provides a validity of the continuation procedure.

Now assume $d + 1 > N$. Then the obtained vector of coefficients $c$ is of higher dimension than $a$. The discretization \eqref{Discretization_LS} cannot be applied directly, since there is not enough stencil points to express the finite differences for the derivatives of order higher than $N - 1$. However we can increase the set of stencil points. Let us choose additional $d + 1 - N$ stencil points $\bar s_{N + 1}, \bar s_{N + 2}, ..., \bar s_{d + 1}$. Applying continuation \eqref{Continuation_LS} to the original ODE \eqref{linear_ODE} and then \eqref{Discretization_LS} to the obtained PDE using the augmented set of stencil points we get a new ODE gains $\bar a$ which are expressed as
$$
\footnotesize{
\bar a = \begin{pmatrix}
1 & \cdots & 1 & 1 & \cdots & 1 \\
s_1 & \cdots & s_N & \bar s_{N+1} & \cdots & \bar s_{d + 1} \\
\vdots & \ddots & \vdots & \vdots & \ddots & \vdots \\
s_1^{d} & \cdots & s_N^{d} & \bar s_{N+1}^d & \cdots & \bar s_{d + 1}^d
\end{pmatrix}^{-1}\begin{pmatrix} 
1 & \cdots & 1 \\
s_1 & \cdots & s_N \\
\vdots & \ddots & \vdots \\
s_1^d  & \cdots & s_N^d
\end{pmatrix} a.}
$$
It is now clear that the first $N$ elements of $\bar a$ are exactly $a$ and the rest is zero, irrespective of the chosen additional points $\bar s_j$. Thus the artificially introduced stencil points do not appear in the discretized PDE, rendering the same ODE as the original one. \qed

The latter part of the proof of Theorem 1 means that the PDE obtained by the process \eqref{Continuation_LS} with $d + 1 > N$ has more information that one with $d + 1 = N$, since it provides exact Taylor approximations not only on the given set of points, but in the additional $d + 1 - N$ points which can be chosen arbitrary. This property can be used to define the order of accuracy as $d + 1 - N$.

\begin{definition}
Order of accuracy of a valid continuation process of ODE to PDE is defined as the number of additional points in which the corresponding discretization process can be made exact simultaneously, i.e. $d + 1 - N$.
\end{definition}
For example, a continuation
$$
\rho_{i+1} - \rho_i \quad \rightarrow \quad \Delta x \frac{\partial \rho}{\partial x}
$$
is of order of accuracy 0, since trying to discretize the PDE on any larger set of stencil points except from $\{i, i+1\}$ will give different ODE. At the same time a continuation
$$
\rho_{i+1} - \rho_{i-1} \quad \rightarrow \quad 2\Delta x \frac{\partial \rho}{\partial x} + \left(0 \cdot \frac{\partial^2 \rho}{\partial x^2} \right)
$$
has order of accuracy 1 because the second derivative vanishes (thus $d = 2$), and it is possible to discretize the PDE on a set of stencil points of size 3 (with one additional point), for example $\{i-1, i, i+1\}$. From now on we will consider only valid continuations, and most often in practice we will use continuations of order of accuracy $0$.

\subsection{Convergence of continuation}

It is clear that the higher order of continuation is taken, the better the original ODE operator \eqref{linear_ODE} is approximated by the PDE \eqref{linear_PDE}. It is possible to study the convergence properties by shifting the problem to the frequency domain using the Fourier transform.

Let us define the function $a(x)$ as 
$$
a(x) = \Sum_{j=1}^N a_j \delta\left(x + s_j \Delta x \right),
$$
where $\delta(x)$ is the Dirac delta function. Therefore, for any integrable function $\rho(x)$ the equation \eqref{linear_ODE} is equivalent to the following system with convolution
\begin{equation} \label{linear_ODE_conv}
\dot\rho(x) = (a \star \rho)(x).
\end{equation}
Use now the Fourier transform, defined as 
\begin{equation}
\hat f(\omega) = \int_{-\infty}^{\infty} f(x) e^{-i x \omega} dx
\end{equation}
for any integrable function $f(x)$ and for any frequency $\omega \in \R$. It is known that the Fourier image of a convolution is a multiplication. Therefore the system \eqref{linear_ODE_conv} is just
\begin{equation} \label{spectrum_ODE}
\dot{\hat\rho}(\omega) = \hat a(\omega) \hat\rho(\omega), \qquad \hat a(\omega) = \Sum_{j=1}^N a_j e^{i s_j \Delta x \omega},
\end{equation}
where $\hat a(\omega)$ was found by direct calculation of Fourier transform. \eqref{spectrum_ODE} immediately shows that the spectrum of the original ODE system is parametrized by $\hat a(\omega)$. In fact, this result is well-known, since the system \eqref{linear_ODE} on an infinite line belongs to the class of Laurent systems, whose spectrum is known to be \eqref{spectrum_ODE}, see \cite{LAURENT}.

Now let us calculate the spectrum of the continualized system \eqref{linear_PDE}. By another property of the Fourier transform, if the function $\rho(x)$ is sufficiently smooth and its derivatives are integrable, we can recover their Fourier images by
$$
\widehat {\left( \frac{\partial^k \rho}{\partial x^k}\right)} \left(\omega\right) = \left(i\omega\right)^k \hat\rho(\omega).
$$
Therefore \eqref{linear_PDE} is read in frequency domain as
\begin{equation} \label{spectrum_PDE}
\dot{\hat\rho}(\omega) = \hat c(\omega) \hat\rho(\omega), \qquad \hat c(\omega) = \Sum_{k=0}^d c_k \frac{\Delta x^k}{k!} \left(i\omega\right)^k.
\end{equation}
Substituting \eqref{Continuation_LS}, we can rewrite \eqref{spectrum_PDE} as
\begin{equation} \label{spectrum_PDE_exp}
\hat c(\omega) = \Sum_{j=1}^N a_j \Sum_{k=0}^d \frac{(i s_j \Delta x \omega)^k}{k!}.
\end{equation}

Now, comparing \eqref{spectrum_PDE_exp} with \eqref{spectrum_ODE}, it is clear that \eqref{spectrum_PDE_exp} uses the first $d + 1$ terms of the Taylor expansion of the exponential function in \eqref{spectrum_ODE}. In fact, since the exponential function is analytic on the whole complex plane, we have just proven the following result:
\begin{theorem}
The spectrum of the PDE \eqref{linear_PDE} converges to the spectrum of the original ODE \eqref{linear_ODE} pointwise as $d \to \infty$.
\end{theorem}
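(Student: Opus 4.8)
The plan is to work entirely in the frequency domain, using the two symbols already computed in \eqref{spectrum_ODE} and \eqref{spectrum_PDE_exp}. The spectrum of the ODE is parametrized by $\hat a(\omega) = \Sum_{j=1}^N a_j e^{i s_j \Delta x \omega}$, while the spectrum of the continualized PDE of order $d$ is parametrized by $\hat c(\omega) = \Sum_{j=1}^N a_j \Sum_{k=0}^d \frac{(i s_j \Delta x \omega)^k}{k!}$, the latter obtained by substituting \eqref{Continuation_LS} into \eqref{spectrum_PDE}. Since the spectrum here is just the family of symbol values indexed by $\omega \in \R$, pointwise convergence amounts to the scalar claim that $\hat c(\omega) \to \hat a(\omega)$ as $d \to \infty$ for each fixed $\omega$. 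I would therefore fix an arbitrary $\omega \in \R$ and reduce the statement to convergence of this single complex quantity.

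The core step is to recognize the inner sum as a truncated exponential series. For each fixed index $j$ and fixed $\omega$, I would set $z_j = i s_j \Delta x \omega \in \C$, so that $\Sum_{k=0}^d z_j^k / k!$ is precisely the $d$-th partial sum of the Maclaurin series of $e^{z_j}$. Because $\exp$ is entire, this series converges on the whole complex plane, hence $\Sum_{k=0}^d z_j^k / k! \to e^{z_j} = e^{i s_j \Delta x \omega}$ as $d \to \infty$. Since the outer sum runs over only the finite index set $j \in \{1,\dots,N\}$, I can pass to the limit term by term — a finite sum of convergent sequences converges to the sum of the limits — and conclude $\hat c(\omega) \to \Sum_{j=1}^N a_j e^{i s_j \Delta x \omega} = \hat a(\omega)$, which is exactly the asserted pointwise convergence.

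There is no substantial obstacle in establishing pointwise convergence: the finiteness of $N$ removes any need to interchange limits with an infinite sum, and the entire nature of $\exp$ supplies convergence at every point, so the argument is essentially a one-liner built on the observation already made just before the theorem. The only genuine subtlety worth flagging is \emph{why} the statement is restricted to pointwise, and not uniform, convergence in $\omega$. The remainder of the exponential series at $z_j$ has size controlled by $|z_j|^{d+1}/(d+1)! = |s_j \Delta x \omega|^{d+1}/(d+1)!$, and for any fixed truncation order $d$ this grows without bound as $|\omega| \to \infty$; hence no single $d$ yields a uniformly small error across all frequencies. I would state this explicitly so that the reader understands the precise mode of convergence being claimed, and note that it is consistent with the continuation being an arbitrarily good low-frequency approximation whose accuracy degrades at high frequencies for any fixed order.
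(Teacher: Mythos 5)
Your proof is correct and follows essentially the same route as the paper: both identify $\hat c(\omega)$ in \eqref{spectrum_PDE_exp} as the order-$d$ truncation of the exponential series appearing in $\hat a(\omega)$ of \eqref{spectrum_ODE}, and invoke the fact that $\exp$ is entire to obtain pointwise convergence for each fixed $\omega$. Your closing remark on why the convergence cannot be uniform in $\omega$ matches the paper's own discussion immediately following the theorem, so nothing is missing.
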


However, the convergence of spectrums is not uniform. Moreover, the spectrum \eqref{spectrum_ODE} is an image of the unit circle and thus is a compact set, while the spectrum \eqref{spectrum_PDE_exp} for any $d$ is a polynomial and thus unbounded. It is still possible to state several corollaries of Theorem 2:
\begin{corollary}
For any bounded subset of frequencies the spectrum of the PDE \eqref{spectrum_PDE_exp} converges to the spectrum of the ODE \eqref{spectrum_ODE} uniformly.
\end{corollary}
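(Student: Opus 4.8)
The plan is to bound the difference $\hat c(\omega) - \hat a(\omega)$ directly and to show that the bound is independent of $\omega$ over the bounded frequency set. Comparing \eqref{spectrum_PDE_exp} with \eqref{spectrum_ODE}, the difference is precisely the collection of truncated tails of the exponential series,
$$
\hat a(\omega) - \hat c(\omega) = \Sum_{j=1}^N a_j \Sum_{k=d+1}^\infty \frac{(i s_j \Delta x \omega)^k}{k!}.
$$
So I would first fix a bound $\Omega > 0$ on the given set of frequencies, restrict to $|\omega| \le \Omega$, and introduce the constant $R := \Delta x\,\Omega \max_j |s_j|$, so that every argument satisfies $|i s_j \Delta x \omega| \le R$ uniformly in $j \in \{1,..,N\}$ and in $\omega$ with $|\omega|\le\Omega$.

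Next I would invoke the standard tail estimate for the complex exponential: for any $z \in \C$ with $|z| \le R$,
$$
\left| \Sum_{k=d+1}^\infty \frac{z^k}{k!} \right| \le \Sum_{k=d+1}^\infty \frac{|z|^k}{k!} \le \frac{R^{d+1}}{(d+1)!}\, e^{R}.
$$
The crucial feature is that this bound depends only on $R$ and $d$, and not on the particular $z$ (hence not on $\omega$). Applying it termwise and using the triangle inequality yields
$$
\sup_{|\omega|\le\Omega} \left| \hat a(\omega) - \hat c(\omega) \right| \le \left(\Sum_{j=1}^N |a_j|\right) \frac{R^{d+1}}{(d+1)!}\, e^{R}.
$$

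Finally, since $R$ is a fixed constant determined solely by $\Omega$, $\Delta x$ and the stencil $\{s_j\}$, the factorial in the denominator dominates and $R^{d+1}/(d+1)! \to 0$ as $d \to \infty$. Hence the right-hand side tends to zero \emph{independently of} $\omega$, which is exactly uniform convergence on $\{\,\omega : |\omega| \le \Omega\,\}$. As for difficulties, there is essentially no deep obstacle here: the entire content is that the Taylor remainder of $\exp$ is controlled by an estimate that is uniform over any compact set. The only point requiring a little care is stating that uniform tail bound cleanly (one could equivalently use the integral form of the Lagrange remainder); everything else reduces to the triangle inequality and the observation that the resulting bound no longer contains $\omega$.
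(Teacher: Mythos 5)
Your proof is correct. It differs in character from the paper's own argument, which is a one-line appeal to Theorem~2 combined with the classical fact that the Taylor expansion of an analytic function converges uniformly on any bounded domain. You instead prove that fact by hand for the exponential: you write the error $\hat a(\omega)-\hat c(\omega)$ as a sum of exponential tails, introduce the uniform radius $R=\Delta x\,\Omega\max_j|s_j|$, and control each tail by the estimate $\bigl|\sum_{k=d+1}^\infty z^k/k!\bigr|\le R^{d+1}e^{R}/(d+1)!$ for $|z|\le R$ (which is valid, since $(d+1)!/(d+1+m)!\le 1/m!$), so the triangle inequality gives a bound independent of $\omega$. What the paper's route buys is brevity and generality: it works verbatim for any analytic symbol, with no computation. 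What your route buys is self-containedness and an explicit convergence rate, namely $\bigl(\sum_j|a_j|\bigr)R^{d+1}e^{R}/(d+1)!$, which makes visible how the order $d$ needed for a given accuracy scales with the frequency window $\Omega$, the spacing $\Delta x$, and the stencil width $\max_j|s_j|$ --- quantitative information the qualitative citation does not provide.
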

\begin{proof}
The result follows directly from Theorem 2 and the fact that the Taylor expansion of any analytic function converges to the function itself uniformly on any bounded domain.
\end{proof}
\begin{figure}%[h]
\begin{center}
\includegraphics[width=\linewidth]{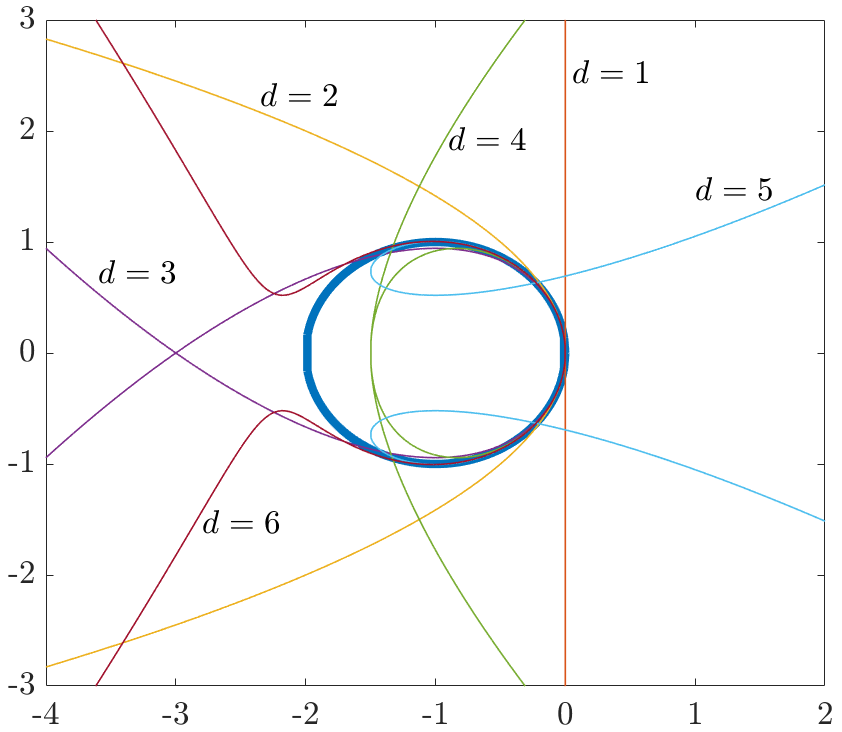}
\caption{Spectrum for the Transport ODE \eqref{Transport_ODE_2} $e^{i\omega}-1$ (blue circle) together with spectrums of the continuations up to the order 6, including \eqref{Transport_PDE_2}. As $d$ increases, spectrums converge to the blue circle, however for some orders (such as 4 or 5) they can become unstable.} \label{FIG_transport_PDE_spectrum}
\end{center}
\end{figure}
\begin{corollary}
If the original ODE \eqref{linear_ODE} is unstable, there exists $D \ge 0$ such that for all $d \ge D$ the continualized system \eqref{linear_PDE} will also be unstable.
\end{corollary}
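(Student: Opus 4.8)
The plan is to reduce the claim to a statement about the real part of the symbols $\hat a(\omega)$ and $\hat c(\omega)$ at a single fixed frequency, and then to invoke the pointwise convergence established in Theorem~2. First I would pin down the meaning of instability: the spatially invariant ODE \eqref{linear_ODE} is unstable precisely when its spectrum, parametrized in \eqref{spectrum_ODE} by $\hat a(\omega)$, meets the open right half-plane, i.e. when there is a frequency $\omega_0 \in \R$ with $\real(\hat a(\omega_0)) > 0$. Because the ODE spectrum is the image of the unit circle under the continuous map $z = e^{i\Delta x \omega} \mapsto \Sum_j a_j z^{s_j}$, it is compact, so if $\sup_\omega \real(\hat a(\omega)) > 0$ then this strict positivity is genuinely attained at some point $\omega_0$ rather than merely approached in the supremum.

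Next I would transfer this positivity to the PDE symbol. By Theorem~2 the PDE spectrum \eqref{spectrum_PDE_exp} converges pointwise to the ODE spectrum \eqref{spectrum_ODE} as $d \to \infty$; evaluating this convergence at the fixed frequency $\omega_0$ gives $\hat c(\omega_0) \to \hat a(\omega_0)$. Since taking the real part is continuous, $\real(\hat c(\omega_0)) \to \real(\hat a(\omega_0))$. Setting $\eps := \real(\hat a(\omega_0)) > 0$, there is therefore a $D \ge 0$ such that $|\real(\hat c(\omega_0)) - \eps| < \eps/2$, hence $\real(\hat c(\omega_0)) > \eps/2 > 0$, for every $d \ge D$. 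This exhibits an unstable mode of the PDE \eqref{linear_PDE} at the frequency $\omega_0$, so the continualized system is unstable for all $d \ge D$, which is exactly the asserted claim.

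I expect the argument to be essentially immediate once the instability criterion is fixed: the only real content is the continuity of $\real(\cdot)$ combined with pointwise convergence at one frequency, so Corollary~1 and its uniform convergence are not even required. The one point deserving care is the edge case of \emph{marginal} instability, where $\real(\hat a(\omega)) \le 0$ everywhere with equality at some $\omega_0$; there the argument does not apply, and the corollary should be read as concerning strict instability, $\real(\hat a(\omega_0)) > 0$. If one wished to conclude instability over an entire band of frequencies rather than at the single point $\omega_0$, one could instead apply Corollary~1 on a small interval containing $\omega_0$ on which $\real(\hat a)$ stays bounded below by $\eps/2$, but this refinement is not needed for the statement as given.
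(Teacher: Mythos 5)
Your proposal is correct and follows essentially the same route as the paper's own proof: identify a frequency $\omega_0$ with $\real \hat a(\omega_0) > 0$ and use the pointwise convergence of $\hat c(\omega_0)$ to $\hat a(\omega_0)$ from Theorem~2 to conclude $\real \hat c(\omega_0) > 0$ for all sufficiently large $d$. The extra details you add (the explicit $\eps/2$ argument, the remark on marginal instability) are sound but do not change the substance.
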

\begin{proof}
Since the original system is unstable, there exists $\omega_0$ such that $\real \hat a(\omega_0) > 0$. Now, by the definition of a limit there exists $D \ge 0$ such that for all $d \ge D$  $\real \hat c(\omega_0) > 0$.
\end{proof}
Unfortunately, the stability of the original system cannot be always recovered by taking large enough $d$. The reason for this is in the unboundedness of the polynomial spectrum \eqref{spectrum_PDE_exp}. Even if the original system is stable, for some $k$ it is possible to introduce artificial instability on high frequencies. There are however some guidelines which can be used to properly choose the order of continuation $d$:
\begin{corollary}
PDE \eqref{linear_PDE} with an odd order of continuation $d$ has the same stability properties as a PDE with the order of continuation $d-1$.
\end{corollary}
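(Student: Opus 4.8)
The plan is to reduce the whole statement to a comparison of the real parts of the two symbols, since stability of \eqref{linear_PDE} is governed entirely by $\real \hat c(\omega)$. Indeed, in the frequency domain each mode obeys $\dot{\hat\rho}(\omega) = \hat c(\omega)\hat\rho(\omega)$, so $\hat\rho(\omega,t) = e^{\hat c(\omega)t}\hat\rho(\omega,0)$ and the growth or decay rate of every Fourier mode is exactly $\real\hat c(\omega)$; the system is stable precisely when $\real\hat c(\omega)\le 0$ for all $\omega$ and unstable as soon as $\real\hat c(\omega_0)>0$ for some $\omega_0$, in line with the argument already used for Corollary 3.

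First I would split $\hat c(\omega)$ from \eqref{spectrum_PDE} according to the parity of $k$. The coefficients $c_k\,\Delta x^k/k!$ are real (by \eqref{Continuation_LS}, $c_k = \Sum_{j=1}^N a_j s_j^k$ with $a_j\in\R$, $s_j\in\Z$), while $(i\omega)^k = i^k\omega^k$ is real for even $k$ and purely imaginary for odd $k$. Hence the real part collects only the even-indexed terms,
$$
\real \hat c(\omega) = \Sum_{\substack{0\le k\le d\\ k\text{ even}}} c_k \frac{\Delta x^k}{k!}(-1)^{k/2}\omega^k ,
$$
and every odd-indexed term feeds exclusively into the imaginary part.

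The key observation is then immediate. Writing $\hat c_d$ for the symbol \eqref{spectrum_PDE} of the order-$d$ continuation, the two symbols of consecutive orders differ only by the top term,
$$
\hat c_d(\omega) - \hat c_{d-1}(\omega) = c_d \frac{\Delta x^d}{d!}(i\omega)^d .
$$
When $d$ is odd this difference is purely imaginary for every $\omega\in\R$, so $\real \hat c_d(\omega) = \real \hat c_{d-1}(\omega)$ identically. Since stability is a property of the real part alone, the order-$d$ and order-$(d-1)$ PDEs have pointwise-equal growth rates and therefore the same stability properties.

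The only point requiring care — rather than a genuine obstacle — is to make explicit what ``stability properties'' is taken to mean and to confirm it is a functional of $\real\hat c$ only. I would simply record that, mode by mode, boundedness, asymptotic decay, and exponential growth of $e^{\hat c(\omega)t}$ are all determined by the sign of $\real\hat c(\omega)$ (including the marginal case $\real\hat c(\omega)=0$); the equality of real parts established above then transfers each such property verbatim between orders $d$ and $d-1$.
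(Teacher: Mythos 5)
Your proposal is correct and follows essentially the same argument as the paper: the paper's proof is the one-line observation that all odd terms in the symbol \eqref{spectrum_PDE_exp} are purely imaginary and hence have no impact on stability, which is exactly your key step that $\hat c_d(\omega) - \hat c_{d-1}(\omega) = c_d \frac{\Delta x^d}{d!}(i\omega)^d$ is purely imaginary for odd $d$. Your write-up merely makes explicit the (correct) reduction of stability to the sign of $\real\hat c(\omega)$, which the paper leaves implicit from its treatment of Corollaries 2 and 4.
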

\begin{proof}
All odd terms in the spectrum \eqref{spectrum_PDE_exp} are purely imaginary and thus have no impact on the stability.
\end{proof}
\begin{corollary}
Artificial instability is introduced when the last even term in the PDE \eqref{linear_PDE} has $c_k > 0$ if $k = 4m$ or $c_k < 0$ if $k = 4m + 2$ for some $m \in \Z^+$.
\end{corollary}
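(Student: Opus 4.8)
The plan is to track the high-frequency behaviour of the real part of the spectrum $\hat c(\omega)$ from \eqref{spectrum_PDE_exp}, since by the preceding analysis instability means exactly that $\real \hat c(\omega) > 0$ for some $\omega$. First I would start from $\hat c(\omega) = \Sum_{k=0}^d c_k \frac{\Delta x^k}{k!} (i\omega)^k$ and split the sum according to the parity of $k$. By Corollary 4, every odd-$k$ term is purely imaginary and contributes nothing to $\real \hat c(\omega)$, so the real part is a polynomial in $\omega$ assembled solely from the even-order terms of the PDE \eqref{linear_PDE}.

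Next I would pin down the sign attached to each surviving term. Writing $k = 2j$ gives $i^k = (-1)^j$, hence $i^{4m} = +1$ and $i^{4m+2} = -1$. Since $\Delta x^k/k!$ and $\omega^k$ are both positive for even $k$ and $\omega \neq 0$, the $k$-th term contributes $c_k \frac{\Delta x^k}{k!}\omega^k$ to $\real \hat c(\omega)$ with a plus sign when $k = 4m$ and a minus sign when $k = 4m+2$. Collecting these, $\real \hat c(\omega)$ is an even polynomial in $\omega$ whose coefficients are $c_k \frac{\Delta x^k}{k!}(-1)^{k/2}$.

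Then I would invoke leading-term dominance: as $\omega \to \infty$, $\real \hat c(\omega)$ is governed by its highest even power, i.e. by the last even term of \eqref{linear_PDE}, say $c_K$. If the coefficient of $\omega^K$ is positive, then $\real \hat c(\omega) \to +\infty$, so some frequency satisfies $\real \hat c(\omega) > 0$ and the continualized system is unstable. Because the original ODE spectrum \eqref{spectrum_ODE} is bounded (the image of the unit circle), this unbounded positive real part is produced by the truncation alone and is therefore an \emph{artificial} instability. Matching the two cases closes the argument: for $K = 4m$ the leading coefficient is $c_K \frac{\Delta x^K}{K!}$, positive iff $c_K > 0$; for $K = 4m+2$ it is $-c_K \frac{\Delta x^K}{K!}$, positive iff $c_K < 0$, which is precisely the stated condition.

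The argument is essentially sign bookkeeping on $i^k$ followed by a leading-term estimate, so I do not anticipate a genuine obstacle. The only point requiring care is justifying the adjective \emph{artificial}: I would make explicit that the claim concerns instability created by truncating the exponential series rather than one inherited from the ODE, which is exactly why the comparison with the bounded spectrum \eqref{spectrum_ODE} is needed.
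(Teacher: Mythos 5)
Your proposal is correct and follows essentially the same route as the paper: the real part of $\hat c(\omega)$ comes only from the even-order terms, the sign bookkeeping $i^{4m}=1$, $i^{4m+2}=-1$ converts the stated sign conditions into positivity of the highest even-power coefficient, and leading-term dominance then forces $\real \hat c(\omega)>0$ at high frequencies. Your added justification of the word \emph{artificial} (contrasting the unbounded polynomial spectrum with the bounded ODE spectrum) is a welcome elaboration of what the paper leaves implicit, and the only slip is a harmless mislabel: the purely-imaginary odd-term fact is the preceding corollary, not this one.
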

\begin{proof}
Artificial instability comes if the term of the polynomial \eqref{spectrum_PDE_exp} with the highest even power is positive, which leads to a positive real part of a spectrum on high frequencies. Positivity of the highest even term is exactly equivalent to the statement of the corollary since $i^{4m} = 1$ and $i^{4m+2} = -1$ for any $m \in \Z^+$.
\end{proof}

We will demonstrate the convergence of spectrums on the Transport ODE
\begin{equation} \label{Transport_ODE_2}
\dot\rho_i = \rho_{i+1} - \rho_i.
\end{equation}
Assuming $\Delta x = 1$, the continuation of \eqref{Transport_ODE_2} is:
\begin{equation} \label{Transport_PDE_2}
\frac{\partial \rho}{\partial t} = \Sum_{k=1}^d \frac{1}{k!}  \frac{\partial^k \rho}{\partial x^k},
\end{equation}
Spectrum of \eqref{Transport_ODE_2} by formula \eqref{spectrum_ODE} is given by $e^{i\omega} - 1$, which is depicted as a blue circle in Fig.~\ref{FIG_transport_PDE_spectrum} together with the spectrums of the continuations up to the order $d=6$. It is clear that as the order increases, the approximations become better. 

The original Transport ODE is stable. Moreover, it has an intrinsic diffusion in it, which can be captured by the continuation of the second order. However, the continuation of order 4 is unstable. It happens because of an artificial instability as described in Corollary 4, since $c_4 = 1 > 0$. In general all stable continuations of the Transport ODE are given by the orders $\{1,2,3, ..., 4m + 2, 4m + 3, ... \}$ for all $m \in \Z^+$.

\section{Method for nonlinear systems} \label{SEC_nonlinear}

Finite differences give us a complete tool for linear systems, but for nonlinear systems they should be applied in composition with nonlinearities. Using an additional concept of computational graph it is possible to elaborate the case of general nonlinear ODE systems.

As in the previous case we assume without loss of generality that the nodes are equally spaced along the 1D line, a node $i$ having a state $\rho_i$ and a position $x_i$. Then the general nonlinear ODE with space dependence takes form of
\begin{equation} \label{nonlinear_ODE}
\dot \rho_i = F(\rho_{i+s_1}, \rho_{i+s_2}, ..., \rho_{i+s_N}).
\end{equation}
We further assume that the function $F$ is continuous.

\subsection{Computational graph}

In 1957 Kholmogorov \cite{KHOLMOGOROV} showed that every multidimensional continuous function can be written as a composition of functions of one variable and additions. This work laid the basis for the neural networks function approximation, which is now a major branch of modern machine learning.

Here we will use this idea and assume that the function $F$ is given in the form of computational graph (see \cite{COMPUTATIONAL_GRAPH} for example). This is a directed acyclic graph, every node of which represents a one-dimensional function, applied to a weighted sum of inputs coming to this node. We assume that the leaves of this graph are the states of the system $\rho_{i+s_j}$ and the root node computes the resulting value of $F$. 

As an example of the computation graph we will consider a system
\begin{equation} \label{Kuramoto_ex}
\dot \rho_i = \sin(\rho_{i+1} - \rho_i) - \sin(\rho_i - \rho_{i-1})
\end{equation}
which is a system of Kuramoto oscillators coupled on a ring. The computational graph for \eqref{Kuramoto_ex} is presented in Fig.~\ref{FIG_graph_sin}.
\begin{figure}%[h]
\begin{center}
\begin{tikzpicture}[modal]
	
	\node[world_c] (res) {\text{result}};
	
	\node[world] (s1) [below left=1.0cm of res, xshift=-0.6cm] {\text{sin}};
	\node[world] (s2) [below right=1.0cm of res, xshift=0.6cm] {\text{sin}};
	
	\node[world] (p1) [below left=1.0cm of s1, xshift=0.3cm] {$\rho_{i-1}$};
	\node[world] (p2) [below right=1.0cm of s1, xshift=-0.3cm] {$\rho_i$};
	\node[world] (p3) [below left=1.0cm of s2, xshift=0.3cm] {$\rho_i$};
	\node[world] (p4) [below right=1.0cm of s2, xshift=-0.3cm] {$\rho_{i+1}$};
	
	\path[->] (s1) edge node[above left]{-1} (res);
	\path[->] (s2) edge node[above right]{1} (res);
	
	\path[->] (p1) edge node[above left]{-1} (s1);
	\path[->] (p2) edge node[above right]{1} (s1);
	\path[->] (p3) edge node[above left]{-1} (s2);
	\path[->] (p4) edge node[above right]{1} (s2);
	
	\draw[rounded corners, dashed, blue] (-4, -0.8) rectangle (-0.1, -3.7) {};
	\draw[rounded corners, dashed, blue] (4, -0.8) rectangle (0.1, -3.7) {};
	
	\draw[rounded corners, dashed, red] (-3.8, -2.3) rectangle (-2.6, -3.5) {};
	\draw[rounded corners, dashed, red] (-1.55, -2.3) rectangle (-0.35, -3.5) {};
	\draw[rounded corners, dashed, orange] (1.55, -2.3) rectangle (0.35, -3.5) {};
	\draw[rounded corners, dashed, orange] (3.8, -2.3) rectangle (2.6, -3.5) {};
	
	\node (u1) [left=0.4cm of s1, yshift=0.3cm] {$i-\frac{1}{2}$};
	\node (u1) [right=0.4cm of s2, yshift=0.3cm] {$i+\frac{1}{2}$};
\end{tikzpicture}
\end{center}
\caption{Computation graph for the system \eqref{Kuramoto_ex}. Similar subgraphs are outlined by dashed rectangles of the same color. Possible choices of sinus subgraph' positions are written in the corners of blue rectangles.} \label{FIG_graph_sin}
\end{figure}
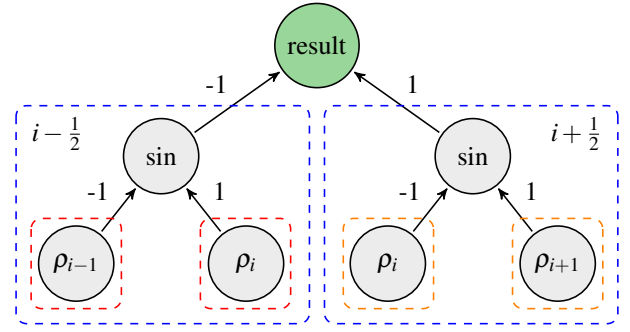

\subsection{Similar subgraphs and their positions}

Now let us introduce an original notion of \textit{similar subgraphs}. Subgraph is a computational graph which computes subexpression of the original computational graph. Every node in a computational graph serves as the root of a subgraph computing expression defined in this node. The leaf nodes are also the subgraphs ''computing'' themselves.

\begin{definition} 
We call two subgraphs \textit{similar} if
\begin{enumerate}
\item they serve as an input to the same node,
\item they differ only in the positions of the leaf nodes, and this difference can be represented by a single shift.
\end{enumerate}
\end{definition}
This is an equivalence relation, therefore we can speak about equivalence classes which we call sets of similar subgraphs.

For example, in Fig.~\ref{FIG_graph_sin} there are three sets of similar subgraphs:
\begin{enumerate}
\item $\rho_{i-1}$ and $\rho_i$ for the left sinus node,
\item $\rho_i$ and $\rho_{i+1}$ for the right sinus node,
\item $\sin(\rho_i - \rho_{i-1})$ and $\sin(\rho_{i+1} - \rho_i)$ for the root node, because they differ by a single shift which equals 1.
\end{enumerate}

The last thing which should be defined is a \textit{position} of a subgraph:
\begin{definition}
Position of a subgraph is defined as a coordinate in space where the expression of this subgraph is calculated.
\end{definition}

The leaf nodes by definition are the states of the system, thus they are calculated at some positions on the line. For example the leaf node $\rho_{i+1}$ in Fig.~\ref{FIG_graph_sin} is defined in the point $x_{i+1}$, thus we will say that its position is $i+1$. 

The root node by definition has a position $i$, since it is exactly the position of the left-hand side term in \eqref{nonlinear_ODE}. We will define the positions of other subgraphs as the average of their leaves' positions. Note that in general there is some freedom in the definition of the subgraphs' positions, with the only constraint that similar subgraphs should differ by a single shift, but we will omit this for simplicity.

Since the position of a subgraph represents a position on the line, it is natural to have non-integer position values, although the leaf nodes and the root have only integer positions. As an example, defining the position as an average, in Fig.~\ref{FIG_graph_sin} the node $\sin(\rho_{i+1} - \rho_i)$ has its position $i + 1/2$.

\subsection{Continuation to a nonlinear PDE}

When system \eqref{nonlinear_ODE} is expressed in a form of computational graph with similar subgraphs being found and their positions being defined, one can perform a continuation procedure described in section \ref{SEC_linear_continuation} to obtain a PDE.

Continuation should be performed recursively, starting from the leaves. Each set of similar subgraphs by definition is used in their common ancestor node as a linear combination of equivalent elements shifted by some distance. Continuation of this linear combination by \eqref{linear_combination_continuation} replaces the set of similar subgraphs by a weighted sum of partial derivatives of subexpressions, calculated at the position of the ancestor node.

Let $\Delta x = x_{i+1} - x_i$ be a distance between any two neighbouring nodes. Elaborating example \eqref{Kuramoto_ex} and using order of accuracy 0, we perform the continuation in three steps: 
\begin{enumerate}
\item $\sin(\rho_{i+1} - \rho_i) \quad \rightarrow \quad \sin\left(\Delta x\dfrac{\partial \rho}{\partial x}(x_{i + 1/2})\right)$,
\item $\sin(\rho_i - \rho_{i-1}) \quad \rightarrow \quad \sin\left(\Delta x\dfrac{\partial \rho}{\partial x}(x_{i - 1/2})\right)$,
\item $\sin_{i+1/2} - \sin_{i-1/2} \quad \rightarrow \quad \Delta x \dfrac{\partial}{\partial x} \sin$.
\end{enumerate} \vskip 5pt
which finally gives a nonlinear PDE representation of \eqref{Kuramoto_ex}:
\begin{equation} \label{Kuramoto_PDE}
\frac{\partial \rho}{\partial t} = \Delta x \dfrac{\partial}{\partial x} \sin\left(\Delta x\dfrac{\partial \rho}{\partial x}\right).
\end{equation}

Speaking about the accuracy, the simplest possible PDE can be obtained by choosing orders of accuracy equal to 0 for each set of similar subgraphs. For more accurate equations it makes sense to specify the desired order of the equation $d$ and then get rid of all the terms which consist of composition of derivatives of combined order higher than $d$.

\section{Extensions}

Until now we discussed systems with nodes which were uniformly placed on the infinite 1D line and which had common space-independent dynamics. The method can be extended to include more classes of systems. 

Spatially invariant systems \cite{BASSAM} such as periodic ones can be tackled by choosing different index spaces. In the periodic case we can assume that the positions $x \in \Ss$ are placed on the unit circle and indices $i \in \Z/n\Z$ form a ring of integers modulo $n$, where $n$ is the number of states of the original ODE. Since any function on $\Ss$ can be mapped to a periodic function on $\R$, the analysis in Sections~\ref{SEC_method_linear} and~\ref{SEC_nonlinear} remain the same.

Further the time dependence can be introduced into system gains both in the ODE and in the PDE, where the continuation is performed independently for every fixed $t$. This allows to use the method for time-varying systems and switching networks. Also systems whose state is vector-valued can be continualized using the same finite differences based scheme, thus in the following we will assume that the state of a system is scalar.

In the following subsections we will explore how the method can be extended to include systems with several spatial dimensions, systems with space dependence or nonuniform placing and systems with boundaries. Finally we introduce a concept of PDE with index derivatives which can be applied to systems whose states coincide with the positions in space, for example particle systems.

\subsection{Multidimensional systems} \label{SEC_multidim}

Let a position of a node $\rho_i$ be described by $x_i \in \R^n$. Moreover, a node $\rho_i$ is referenced by a multi-index $i = (i_1, ..., i_n) \in \Z^n$. We assume that the position difference between two neighbour nodes $i = (i_1, ..., i_k, ..., i_n)$ and $i' = (i_1, ..., i_k + 1, ..., i_n)$ is 
$$
x_{i'} - x_i = (0, ..., \Delta x_k, ..., 0) \qquad \forall k \in \{1..n\},
$$
and that there exists a vector $\Delta x = (\Delta x_1, ..., \Delta x_k, ..., \Delta x_n).$

We shall describe the extension for linear systems, as nonlinear systems can be dealt with in an analogous way by the same computational graph concept as in Section \ref{SEC_nonlinear}. Thus we start from the linear system \eqref{linear_ODE}. For nonnegative multi-index $h$ we define an absolute value $|h| = \sum_{k=1}^n h_k$. Further, we define multi-index power $h$ of a vector $x$ as $x^h = \prod_{k=1}^n x_k^{h_k}$, with an assumption $0^0 = 1$. By Taylor series
\begin{equation}
\rho_{i+s_j} = \rho_i + \Sum_{|h|=1} s_j^h \Delta x^{h} \frac{\partial \rho_i}{\partial x^h} + \Sum_{|h|=2} \frac{s_j^h}{2} \Delta x^{h} \frac{\partial^2 \rho_i}{\partial x^h} + ...,
\end{equation}
and the continuation up to the order $d$ thus is
\begin{equation} \label{pde_LTI_ND}
\frac{\partial \rho}{\partial t} = \Sum_{\substack{h\in\Z^n_+ \\ |h| \le d}} c_h \frac{\Delta x^h}{|h|!}  \frac{\partial^{|h|} \rho}{\partial x^h}, \qquad c_h = \Sum_{j=1}^N a_j s_j^h.
\end{equation}
Using $\Ss$ instead of $\R$, more complex multidimensional spaces can be covered such as torus or cylinder.

\subsection{Space-dependent systems}

Let us now look on the linear system \eqref{linear_ODE} with one important difference: the system gains $a_j$, the shifts $s_j$ and the number of neighbours $N$ become space-dependent:
\begin{equation} \label{linear_space_dependent_ODE}
\dot \rho_i = \Sum_{j=1}^{N_i} a_{ij} \rho_{i+s_{ij}}.
\end{equation}
Notice that equation \eqref{linear_space_dependent_ODE} describes in fact any linear system.

Now, choosing a unique $d$ such that $d + 1 \ge N_i$ for all $i$, one can perform a continuation \eqref{Continuation_LS} at every point $x_i$ up to the order $d$ and obtain a PDE \eqref{linear_PDE} with space dependent gains $c_{ik}$. This means that we know the gains $c_{ik}$ at the points with coordinates $x_i$, which can be seen as a sampling of some function $c_k(x)$ at points $x_i$. 

We can now perform either an interpolation or an approximation based on this sampling. In the first case we seek for $c_k(x)$ such that $c_k(x_i) = c_{ik}$, while in the second case it is enough to satisfy this relation approximately, for example by determining $c_k(x)$ via least squares. In either case, the resulting continuation of \eqref{linear_space_dependent_ODE} is given by
\begin{equation} \label{linear_space_dependent_PDE}
\frac{\partial \rho}{\partial t} = \Sum_{k=1}^{d} c_k(x) \frac{\Delta x^k}{k!} \frac{\partial^k \rho}{\partial x^k}.
\end{equation}

In the case of nonlinear systems the continuation can be performed if the computational graphs for every node compute the same dynamics. We can formalize it by stating the following property:

\begin{definition}
We say that two computational graphs \textit{have the same structure} if
\begin{enumerate}
\item their root nodes compute the same expression,
\item any child subgraph of the root node of the first graph \textit{has the same structure} with some child subgraph of the root node of the second graph and vice versa.
\end{enumerate}  
\end{definition}
This definition, formulated through recursion, essentially means that the order of nonlinearities which is hidden in two computational graphs should coincide, see Fig.~\ref{FIG_graph_def6}.

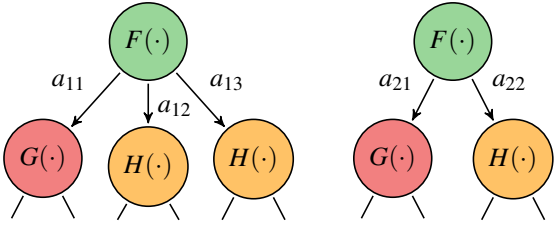
\begin{figure}[h]
\begin{center}
\begin{tikzpicture}[modal]
	
	\node[circle,draw,minimum size=1.0cm,fill=green!60!black!40] (f1) {$F(\cdot)$};
	
	\node[circle,draw,minimum size=1.0cm,fill=green!60!black!40] (f2) [right=3.0cm of f1, xshift=-0.0cm] {$F(\cdot)$};
	
	\node[circle,draw,minimum size=1.0cm,fill=red!90!black!50] (g1) [below=0.5cm of f1, xshift=-1.4cm] {$G(\cdot)$};
	\node[circle,draw,minimum size=1.0cm,fill=red!40!yellow!60] (h1) [below=0.6cm of f1, xshift=0.0cm] {$H(\cdot)$};
	\node[circle,draw,minimum size=1.0cm,fill=red!40!yellow!60] (hh1) [below=0.5cm of f1, xshift=1.4cm] {$H(\cdot)$};
	\node[circle,draw,minimum size=1.0cm,fill=red!90!black!50] (g2) [below=0.5cm of f2, xshift=-0.8cm] {$G(\cdot)$};
	\node[circle,draw,minimum size=1.0cm,fill=red!40!yellow!60] (h2) [below=0.5cm of f2, xshift=0.8cm] {$H(\cdot)$};
	
	\path[->] (f1) edge node[above left]{$a_{11}$} (g1);
	\path[->] (f1) edge node[above right, pos=0.99]{$a_{12}$} (h1);
	\path[->] (f1) edge node[above right]{$a_{13}$} (hh1);
	\path[->] (f2) edge node[above left]{$a_{21}$} (g2);
	\path[->] (f2) edge node[above right]{$a_{22}$} (h2);
	
	\node (g11) [below=0.3cm of g1, xshift=-0.5cm] {};
	\node (g12) [below=0.3cm of g1, xshift=0.5cm] {};
	\node (h11) [below=0.2cm of h1, xshift=-0.5cm] {};
	\node (h12) [below=0.2cm of h1, xshift=0.5cm] {};
	\node (hh11) [below=0.3cm of hh1, xshift=-0.5cm] {};
	\node (hh12) [below=0.3cm of hh1, xshift=0.5cm] {};
	\node (g21) [below=0.3cm of g2, xshift=-0.5cm] {};
	\node (g22) [below=0.3cm of g2, xshift=0.5cm] {};
	\node (h21) [below=0.3cm of h2, xshift=-0.5cm] {};
	\node (h22) [below=0.3cm of h2, xshift=0.5cm] {};
	
	\path[-] (g1) edge (g11);
	\path[-] (g1) edge (g12);
	\path[-] (h1) edge (h11);
	\path[-] (h1) edge (h12);
	\path[-] (hh1) edge (hh11);
	\path[-] (hh1) edge (hh12);
	\path[-] (g2) edge (g21);
	\path[-] (g2) edge (g22);
	\path[-] (h2) edge (h21);
	\path[-] (h2) edge (h22);

	%\draw[rounded corners, dashed, blue] (-4, -0.8) rectangle (-0.1, -3.7) {};
	%\draw[rounded corners, dashed, blue] (4, -0.8) rectangle (0.1, -3.7) {};
	
	%\draw[rounded corners, dashed, red] (-3.8, -2.3) rectangle (-2.6, -3.5) {};
	%\draw[rounded corners, dashed, red] (-1.55, -2.3) rectangle (-0.35, -3.5) {};
	%\draw[rounded corners, dashed, orange] (1.55, -2.3) rectangle (0.35, -3.5) {};
	%\draw[rounded corners, dashed, orange] (3.8, -2.3) rectangle (2.6, -3.5) {};
	
	%\node (u1) [left=0.4cm of s1, yshift=0.3cm] {$i-\frac{1}{2}$};
	%\node (u1) [right=0.4cm of s2, yshift=0.3cm] {$i+\frac{1}{2}$};
\end{tikzpicture}
\end{center}
\vskip -15pt
\caption{Illustration of two computational graphs having the same structure.} \label{FIG_graph_def6}
\end{figure}

Finally, a continuation of a nonlinear ODE system can be performed if all the computational graphs computing the dynamics for all states $\rho_i$ have the same structure. Indeed, in this case it is possible to perform a continuation for any set of similar subgraphs for each node as in the linear case of \eqref{linear_space_dependent_ODE}-\eqref{linear_space_dependent_PDE}. Moreover, by Definition 6 these sets of similar subgraphs for different positions serve as inputs to the same nonlinearities, therefore a unique PDE with space-dependent coefficients can be obtained.

\begin{remark}
In theory, it is possible to satisfy Definition 6 for any nonlinear system formulated through computational graphs. Indeed, assume two computational graphs have two different root node expressions, denoted as $F(\cdot)$ and $G(\cdot)$ respectively. Then we can artificially create a new common root node which will compute $1 \cdot F(\cdot) + 0 \cdot G(\cdot)$ for the first graph and $0 \cdot F(\cdot) + 1 \cdot G(\cdot)$ for the second. Thus we can satisfy the first condition of Definition 6, and recursively applying this idea one can transform any pair of computational graphs into the pair which has the same structure.
However, if the computational graphs of the system are too different in different points, it can make no sense to represent a system as a PDE, since it means that the dynamics of different parts of the system has nothing in common.
\end{remark}

\subsection{Unequally-spaced systems}

In the original derivation of \eqref{linear_combination_continuation} we assumed that the nodes are separated by constant $\Delta x$ in space. This was done for simplicity, and in general systems with nonuniform spacing can be tackled completely in the same way as space-dependent systems described in previous subsection. Indeed, assume the linear system is given by $\dot \rho_i = \sum_{j=1}^{N_i} a_{ij} \rho_{i+s_{ij}}$, where $x_{i+1} - x_i \not = x_{j+1} - x_j$ for $i \not = j$ in general. Then for every point the continuation can be performed by defining $c_{ik} = \sum_{j=1}^{N_i} a_{ij} (x_{i+s_{ij}}-x_i)^k$. Representing obtained gains by functions $c_k(x)$ as it was described in previous subsection one finishes with a PDE
\begin{equation} \label{linear_space_step_dependent_PDE}
\frac{\partial \rho}{\partial t} = \Sum_{k=1}^{d} c_k(x) \frac{1}{k!} \frac{\partial^k \rho}{\partial x^k}.
\end{equation}

\subsection{Boundary conditions} \label{SEC_boundary}

Now let us look at the Heat PDE:
\begin{equation} \label{Heat_PDE}
\frac{\partial \rho}{\partial t} = \frac{\partial^2 \rho}{\partial x^2}.
\end{equation}
Imagine that this equation is defined on an interval $x~\in~[0,~+\infty)$, that is there is a boundary in the point $x = 0$. 

There are two types of boundary conditions (or BC) which can be supplied to provide a well-posed boundary value problem. For example for some $a \in \R$,
\begin{equation} \label{Heat_PDE_BC}
\begin{aligned}
\text{1)} \quad &\textit{Dirichlet} \text{ BC: } \rho(0) = a, \\
\text{2)} \quad &\textit{Neumann} \text{ BC: } \partial\rho/\partial x \: (0) = a.
\end{aligned}
\end{equation}
There can also exist a linear combination of these boundary conditions, called \textit{Robin} BC.

If the Heat Equation \eqref{Heat_PDE} is discretized in stencil points $\{i-1, i, i+1\}$, the result is
\begin{equation} \label{Heat_ODE}
\dot \rho_i = \frac{1}{\Delta x^2} \left( \rho_{i-1} - 2 \rho_i + \rho_{i+1} \right).
\end{equation}
Assume now that there exists $i_0 = 1$ such that $x_{i_0-1} = 0$. Depending on the type of boundary conditions, the equation for the state $\rho_{1}$ can be obtained by the discretization of a boundary value problem \eqref{Heat_PDE}-\eqref{Heat_PDE_BC} in two ways:
\begin{equation}\label{Heat_ODE_BC}
\begin{aligned}
\text{1)} \quad &\text{Dirichlet BC: } \dot \rho_{1} = \left( a - 2 \rho_{1} + \rho_{2} \right) / \Delta x^2, \\
\text{2)} \quad &\text{Neumann BC: } \dot \rho_{1} = \left( \rho_{2} - \rho_{1} \right) / \Delta x^2 - a/ \Delta x.
\end{aligned}
\end{equation}

Now imagine the system \eqref{Heat_PDE} is obtained by the continuation process from the system \eqref{Heat_ODE}. We can notice that states of \eqref{Heat_ODE} are governed by the same dynamics except for the boundary state $\rho_{1}$. The question is how to recover the boundary conditions \eqref{Heat_PDE_BC} for the PDE from the dynamics of $\rho_{1}$ in \eqref{Heat_ODE_BC}.

This indeed can be done if one assumes that there exists a ''ghost cell'' $\rho_{0}$ such that it has no dynamics, but is algebraically connected with adjacent states. With a proper definition of $\rho_{0}$ the equation for $\dot \rho_{1}$ can be represented in the same way as for other states \eqref{Heat_ODE} and thus has the same continuation \eqref{Heat_PDE}. For example, algebraic equations for $\rho_{0}$ representing \eqref{Heat_ODE}-\eqref{Heat_ODE_BC} are
\begin{equation}\label{Heat_ODE_BC_continuation}
\begin{aligned}
\text{1)} \quad &\text{Dirichlet BC: } \rho_{0} = a, \\
\text{2)} \quad &\text{Neumann BC: } \rho_{0} = \rho_{1} - a \Delta x.
\end{aligned}
\end{equation}
The ghost cell $\rho_0 = a$ for the case of Dirichlet BC is depicted in Fig.~\ref{FIG_boundary}. Notice that equations \eqref{Heat_ODE_BC_continuation} can be directly continualized, obtaining \eqref{Heat_PDE_BC}.

\begin{figure}[h]
\begin{center}
\begin{tikzpicture}[modal]
	
	\node[circle,draw,minimum size=1.1cm,fill=red!40!yellow!60] (x0) {$\rho_{1}$};
	\node[circle,draw,minimum size=1.1cm,fill=red!40!yellow!60] (x1) [right=0.3cm of x0] {$\rho_{2}$};
	\node[circle,draw,minimum size=1.1cm,fill=gray!15] (ghost) [left=0.3cm of x0] {$a$};
	\node (p0) [right=0.3cm of x1] {};
	\node[point_s] (p1) [right=0.5cm of x1] {};
	\node[point_s] (p2) [right=0.7cm of x1] {};
	\node[point_s] (p3) [right=0.9cm of x1] {};
	
	\path[-] (x0) edge (x1);
	\path[-] (x1) edge (p0);
	\path[-] (x0) edge (ghost);
	%\path[<-] (x0) edge[in=60,out=120] node[above]{$a_2$} (xm1);
	%\path[<-] (x0) edge[in=120,out=75] node[above]{$a_j$} (x2);
	
	\path[->] ($(x0)+(-2.5,-1.1)$) edge node[pos=1,right]{$x$} ($(x0)+(4.5,-1.1)$);
	
	\path[-] ($(x0)+(-1.4,-1.2)$) edge node[below]{$0$} ($(x0)+(-1.4,-1.0)$);
	\path[-] ($(x0)+(0,-1.2)$) edge node[below]{$\Delta x$} ($(x0)+(0,-1.0)$);
	\path[-] ($(x0)+(1.4,-1.2)$) edge node[below]{$2 \Delta x$} ($(x0)+(1.4,-1.0)$);
	
\end{tikzpicture}
\end{center}
\vskip -10pt
\caption{Boundary of the system \eqref{Heat_ODE} with Dirichlet boundary condition \eqref{Heat_ODE_BC}, represented by a ghost cell $\rho_0 = a$.} \label{FIG_boundary}
\end{figure}
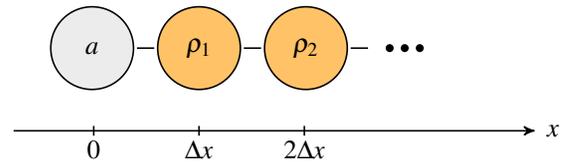

This procedure can be generalized to any ODE system: once the states near boundaries change their dynamics with respect to the general governing equation, this change can be represented by ''ghost cells'' with algebraic dependences on the ''real'' states. Continualizing these algebraic equations leads to the boundary conditions for the obtained PDE.

\subsection{PDE with index derivatives} \label{SEC_pde_index}

Usually PDEs have derivatives written with respect to the time and space variables, thus their physical meaning is in the function continuously varying in time and space. However, in general no one prevents us from writing a PDE with respect to some other variables. 

Assume a physical system is given by a set of interacting agents, with agents being indexed by an integer index $i \in \Z$ (a general multiindex space $\Z^n$ can also be used). Let an agent $i$ have a state $\rho_i$. The index variable $i$ is by definition discrete. However we can make an assumption that in between of two agents with consecutive indexes $i$ and $i+1$ there is a continuum of virtual agents having state varying from $\rho_i$ to $\rho_{i+1}$. Denoting this continuously varying index by $M \in \R$ we can say that the state of the system $\rho$ is a continuous and smooth function $\rho(M)$ with the property $\rho(i) = \rho_i$. This definition of $M$ coincides with the definition of Moskowitz function used to describe the number of vehicles passed through a fixed point in traffic modeling \cite{EE_MOSKOWITZ}.

Once the index variable is continuous, we can think about it as a new space variable. Thus it is possible to use a continuation described in previous sections, where the distance between two consecutive agents is obviously $\Delta M = 1$. The derivatives of the state with respect to the index can be obtained by continuation, for example $\rho_{i+1} - \rho_i \; \to \; \partial \rho / \partial M$. 

\section{Derivation of the Euler Equations}

In the beginning of the XX century Hilbert posed his 6th problem, where he suggested to develop a rigorous way leading from the atomistic view to the laws of motion of continua. In particular, the problem can be formulated as a derivation of the Euler equations for compressible fluids from the Newton's dynamics of individual particles. 

For the most famous case of particles interacting through collision the Boltzmann equation was developed, describing evolution of the joint position-velocity probability distribution of particles. The method of how to transform individual's dynamics into Boltzmann equation is based on the Boltzmann-Grad limit \cite{EE_BOOK_BZ}, assuming velocities of colliding particles being independent. The following transformation from the Boltzmann equation to the Euler equations uses either Hilbert or Chapman-Erskog expansions with space contration limits \cite{EE_BOOK_EXP,EE_CHAPMAN}, Grad moments \cite{EE_GRAD} or the method of invariant manifolds \cite{EE_REVIEW}.

Another situation arises when the particles interact through long-range forces. In this case the Vlasov equation can be used instead of the Boltzmann equation to describe the joint position-velocity probability distribution. The derivation of the Euler equations from the Vlasov equation was performed in \cite{EE_VLASOV_EULER} using space-contracting limit. In particular it was shown that the resulting system has zero temperature, i.e. the velocities of individual particles coincide with the velocity field. However, due to the space contraction the particular form of the potential function was lost and the obtained pressure was just a square of the density.

Here we present a derivation of Euler equations directly from individual's dynamics using the continuation method described in previous sections. Contrary to other works, we do not use any kind of limits and we use only one assumption on the isotropy of the space. The assumption requires that for any particle its nearest neighbours are distributed around uniformly in every direction, which can be seen as a counterpart to the molecular chaos hypothesis for the standard derivation of the Boltzmann equation.

\subsection{System of particles}

It is assumed that the fluid consists of small particles interacting with each other, with every particle following simple Newton laws. We will study the system with $n$ space dimensions, and the particles are assumed to have unit mass. 

We further assume that there is an interaction between each pair of particles which is given by a force
\begin{equation} \label{Euler_force}
F(x_i - x_j) = \frac{x_i - x_j}{\Vert x_i - x_j\Vert} f(\Vert x_i - x_j\Vert) = (x_i - x_j) \phi(\Vert x_i - x_j\Vert),
\end{equation}
thus the force acts along the line connecting two particles with the magnitude $f$ depending only on the distance between particles. For simplicity we also define a function $\phi(s) = f(s)/s$ representing the scaled magnitude. Since we assume the infinite number of particles and an infinitely large space, the magnitude of the force should satisfy
\begin{equation} \label{Euler_force_condition}
\Int_{\eps}^{+\infty} s^{n - 1} f(s) ds < \infty \quad \forall \eps > 0
\end{equation}
such that the cumulative force on any particle is finite. Thus the interaction should be fast-decaying.

We then need to enumerate all particles. For this we will use multiindex $i \in \Z^n$ as written in section \ref{SEC_multidim}. Now let us write the dynamics of a particle with multiindex $i$ using the second Newton's Law:
\begin{equation} \label{Euler_ODE}
\left\{
\begin{aligned}
\dot x_i &= v_i, \\
\dot v_i &= \Sum_{q \not = 0} F(x_i - x_{i+q}),
\end{aligned}
\right.
\end{equation}
where the summation is performed among all multiindices $q$ in $\Z^n\setminus\{0\}$, since all the particles interact with each other. Both the position $x_i$ and the velocity $v_i$ are vectors in $\R^n$. 

\subsection{Derivation in the Euclidean space}

Treating the coordinate $x_i$ as a state and using the idea written in section \ref{SEC_pde_index} we define a multiindex function $M(x,t)$ which is the inverse function of the coordinate: $M(x_i,t) := i$. Likewise, $x(i,t) = x_i(t)$ and thus $x(M(x,t),t) \equiv x \;\; \forall x \in \R^n$. 

Now let us write a property of inverse function of multiindex as $M(x(M,t),t) \equiv M \;\; \forall M \in \R^n$, where the space for multiindices is continuous by the assumption in section \ref{SEC_pde_index}. Taking the time and the index derivatives, we obtain the following very useful relations on Jacobians:
\begin{equation} \label{Euler_M_t}
\frac{\partial M}{\partial t} + \frac{\partial M}{\partial x} \frac{\partial x}{\partial t} = 0,
\end{equation}
\begin{equation} \label{Euler_M_x}
\frac{\partial M}{\partial x} \frac{\partial x}{\partial M} = I.
\end{equation}

Equation \eqref{Euler_M_t} can be seen as a PDE where the function $M$ depends both on $x$ and $t$. Recalling that the multiindex in assumed to be continuous, we can further utilize the first equation of \eqref{Euler_ODE} written in a form $\dot x(M,t) = v(M,t)$, substitute it in \eqref{Euler_M_t} and obtain the following equation on the multiindex evolution:
\begin{equation} \label{Euler_M_PDE}
\frac{\partial M}{\partial t} = -\frac{\partial M}{\partial x} v(M(x,t)) = -\frac{\partial M}{\partial x} u(x,t),
\end{equation}
where the velocity function $u(x,t) = v(M(x,t),t)$ is defined as a velocity of a particle at some given point in space. Finally, taking the derivative with respect to space, we obtain
\begin{equation} \label{Euler_Mx_PDE}
\frac{\partial}{\partial t} \left( \frac{\partial M}{\partial x} \right) = -\frac{\partial}{\partial x} \left( \frac{\partial M}{\partial x} u \right).
\end{equation}

The Jacobian matrix $\frac{\partial M}{\partial x}(x,t)$ represents a \textit{compression tensor}, which measures how close are neighbour particles with respect to different directions in the euclidean space. Evolution of this Jacobian in the euclidean space is described by the matrix PDE \eqref{Euler_Mx_PDE}, which is essentially a transport equation with flow velocity given by $u(x,t)$. 

Now we approach the second equation in \eqref{Euler_ODE}. It would be desirable to transform it in such a way that we could obtain an evolution equation for the flow velocity $u(x,t)$. First of all, let us rewrite the second equation of \eqref{Euler_ODE} in a way more suitable for continuation, namely
\begin{equation} \label{Euler_v_ODE}
\dot v_i = - \Sum_{q > 0} \left(F(x_{i+q} - x_i) - F(x_i - x_{i-q}) \right),
\end{equation}
where the summation is performed among all multiinidices which are greater than zero in lexicographical order, i.e. the first nonzero element of $q$ should be positive.

Now we use the continuation of order of accuracy 0 on a multidimensional system as described in \ref{SEC_multidim} such that 
$$
x_{i+q} - x_i \; \rightarrow \; \frac{\partial x}{\partial M} \left(x_{i+q/2} \right) q, \quad  x_i - x_{i-q} \; \rightarrow \; \frac{\partial x}{\partial M} \left(x_{i-q/2} \right) q,
$$
which means that \eqref{Euler_v_ODE} becomes 
$$
\dot v_i = - \Sum_{q > 0} \left(F\left(\frac{\partial x}{\partial M} q\right)_{i+q/2} - F\left(\frac{\partial x}{\partial M} q\right)_{i-q/2} \right).
$$
Applying the continuation further to the forces, we obtain
$$
F_{i+q/2} - F_{i-q/2} \quad \rightarrow \quad \frac{\partial F}{\partial M} (x_i) q.
$$
Thus \eqref{Euler_v_ODE} transforms into
\begin{equation} \label{Euler_v_PDE_1}
\frac{\partial v}{\partial t} = -\Sum_{q>0} \frac{\partial}{\partial M} \left(\left[ \frac{\partial x}{\partial M} q \right] \phi\left(\left\Vert\frac{\partial x}{\partial M} q \right\Vert \right) \right) q,
\end{equation}
where we used a definition of the force \eqref{Euler_force}.

Now, we state the following result:
\begin{proposition}
For any $q \in \Z^n$ and for any smooth scalar field $\phi(x)$ the following identity holds:
\begin{equation} \label{Prop_1}
\begin{aligned}
&\left[\frac{\partial}{\partial M} \left( \phi \frac{\partial x}{\partial M} q \right) q \right]^T = \\
&\nabla \cdot \left( \phi\frac{\partial x}{\partial M} q q^T \frac{\partial x}{\partial M}^T\right) - \phi \left( \nabla \cdot \left(\frac{\partial x}{\partial M} \right) q q^T \frac{\partial x}{\partial M}^T\right),
\end{aligned}
\end{equation}
where $\nabla$ denotes a row vector of derivatives with respect to $x$.
\end{proposition}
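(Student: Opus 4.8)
The plan is to reduce this matrix identity to a componentwise computation. I would write $J = \partial x/\partial M$ for the Jacobian, with entries $J_{ac} = \partial x_a/\partial M_c$, and introduce the single abbreviation $w = \tfrac{\partial x}{\partial M}\,q$, i.e. $w_a = \sum_c J_{ac} q_c$. Since $q$ is a fixed integer multiindex carrying no dependence on $x$ or $M$, it passes through every derivative, and $w$ is an honest vector field that can be regarded as a function of $x$ or of $M$ (the two being inverse coordinates via \eqref{Euler_M_x}). In this notation the left-hand side of \eqref{Prop_1}, read componentwise, is the vector whose $b$-th entry is $\sum_c q_c\,\partial_{M_c}(\phi w_b)$; the first matrix on the right is $\phi\,w w^T$, whose column-wise divergence has $b$-th entry $\sum_a \partial_{x_a}(\phi\, w_a w_b)$; and the correction term is $\phi\,[(\nabla\cdot J)q]\,w^T$, whose $b$-th entry is $\phi\,w_b\,(\nabla\cdot J)q$.

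First I would record the one structural fact that drives everything: the chain rule $\partial_{M_c} = \sum_a J_{ac}\,\partial_{x_a}$, which upon contraction with $q$ gives the directional-derivative identity $\sum_c q_c\,\partial_{M_c} = \sum_a w_a\,\partial_{x_a}$. In words, differentiating along $q$ in index space equals differentiating along $w$ in physical space. I would also observe, again using that $q$ is constant, that $(\nabla\cdot J)q = \sum_{a,c}(\partial_{x_a}J_{ac})\,q_c$ is exactly the physical divergence $\sum_a \partial_{x_a} w_a$ of the field $w$.

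Next I would expand the right-hand side directly. Applying the product rule to $\sum_a \partial_{x_a}(\phi\, w_a w_b)$ yields three terms, and the middle one, $\phi\,w_b\sum_a \partial_{x_a} w_a$, equals $\phi\,w_b\,(\nabla\cdot J)q$ by the remark above and so cancels the correction term exactly. What survives is $w_b\sum_a w_a\,\partial_{x_a}\phi + \phi\sum_a w_a\,\partial_{x_a} w_b = \sum_a w_a\,\partial_{x_a}(\phi\, w_b)$. Converting the $x$-derivative back to an $M$-derivative with the directional-derivative identity rewrites this as $\sum_c q_c\,\partial_{M_c}(\phi\, w_b)$, which is precisely the $b$-th component of the left-hand side (the transpose merely reconciles column with row vectors).

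The computation is routine once the notation is fixed; the only genuine care is bookkeeping — adopting the convention that $\nabla\cdot$ of a matrix is the row vector of its column divergences, keeping the transpose on the left consistent with the row-vector outputs on the right, and using that $q$ commutes past all derivatives. The single real idea is the cancellation in the previous paragraph: the correction term $-\phi(\nabla\cdot J)\,q q^T J^T$ is engineered precisely to remove the spurious divergence-of-Jacobian contribution produced by the product rule, leaving a clean total derivative along $q$. I expect that recognizing this cancellation, equivalently the identity $(\nabla\cdot J)q = \nabla\cdot w$, is the main — though modest — obstacle.
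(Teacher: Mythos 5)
Your proof is correct and is essentially the paper's own argument: both hinge on abbreviating $h = \frac{\partial x}{\partial M}q$, converting $M$-derivatives contracted with $q$ into $x$-derivatives contracted with $h$ via the chain rule, expanding by the product rule, and observing that $\nabla\cdot\bigl(\frac{\partial x}{\partial M}q\bigr) = \bigl(\nabla\cdot\frac{\partial x}{\partial M}\bigr)q$ so that the correction term cancels the divergence contribution. The only difference is cosmetic — you expand the right-hand side componentwise and simplify toward the left, while the paper starts from the left-hand side in matrix notation and builds up the right — so the two computations are the same algebra run in opposite directions.
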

\begin{proof}
First, for convenience denote the left-hand side as a vector $Q$:
\begin{equation}
Q := \frac{\partial}{\partial M} \left( \phi \frac{\partial x}{\partial M} q \right) q = \frac{\partial}{\partial x} \left( \phi \frac{\partial x}{\partial M} q \right) \frac{\partial x}{\partial M} q.
\end{equation}
Also define $h = (\partial x/\partial M) q$. Expanding $\partial (\phi h)/\partial x$, we get
\begin{equation} \label{Prop_1_Q}
Q = h \frac{\partial \phi}{\partial x} h + \frac{\partial h}{\partial x} h \phi = h h^T \frac{\partial \phi}{\partial x}^T + \frac{\partial h}{\partial x} h \phi.
\end{equation}
Now, for any $h \in \R^n$
$$
\nabla \cdot (h h^T) = \begin{pmatrix}
\Sum_i h_1 \frac{\partial h_i}{\partial x_i} + \Sum_i h_i \frac{\partial h_1}{\partial x_i} & \cdots & \Sum_i h_n \frac{\partial h_i}{\partial x_i} + \Sum_i h_i \frac{\partial h_n}{\partial x_i}
\end{pmatrix},
$$
which means that 
\begin{equation}
\left(\nabla \cdot (h h^T)\right)^T = \frac{\partial h}{\partial x} h + (\nabla \cdot h) h.
\end{equation}
Therefore the transpose of \eqref{Prop_1_Q} is
\begin{equation} \label{Prop_1_Q_T}
Q^T = \frac{\partial \phi}{\partial x} h h^T + \phi \nabla \cdot (hh^T) - \phi (\nabla \cdot h) h^T.
\end{equation}
Since for any matrix $J$ and for any scalar field $\alpha$
\begin{equation} \label{Prop_1_scalar}
\nabla \cdot (\alpha J) = \frac{\partial \alpha}{\partial x} J + \alpha \nabla \cdot J,
\end{equation}
we can simplify \eqref{Prop_1_Q_T} as
\begin{equation}
Q^T = \nabla \cdot (\phi hh^T) - \phi (\nabla \cdot h) h^T.
\end{equation}
The result of the proposition follows by substituting $h$ and noticing that $\nabla \cdot ((\partial x/\partial M) q) = (\nabla \cdot (\partial x/\partial M)) q$.
\end{proof}

Proposition 1 allows us to rewrite \eqref{Euler_v_PDE_1} as being dependent only on the euclidean space divergences and the inverse of the compression tensor $\partial M/\partial x$. To finalize the derivation of  a complete set of equations, recall the definition of the velocity field $u(x,t) = v(M(x,t),t)$. Taking the time derivative:
$$
\frac{\partial u}{\partial t} = \frac{\partial v}{\partial t} + \frac{\partial v}{\partial M} \frac{\partial M}{\partial t},
$$
which by \eqref{Euler_M_PDE} is
$$
\frac{\partial u}{\partial t} = -\frac{\partial v}{\partial M} \frac{\partial M}{\partial x} u + \frac{\partial v}{\partial t}.
$$
This equation can be simplified by $\partial u /\partial x = \partial v /\partial M \cdot \partial M /\partial x$. Finally, substituting \eqref{Euler_v_PDE_1} and \eqref{Prop_1} and combining the result with \eqref{Euler_Mx_PDE} we obtain a system
\begin{equation} \label{Euler_general}
\left\{
\begin{aligned}
\frac{\partial}{\partial t} \left( \frac{\partial M}{\partial x} \right) = &-\frac{\partial}{\partial x} \left( \frac{\partial M}{\partial x} u \right), \\
\frac{\partial u}{\partial t} = &-\frac{\partial u}{\partial x} u - \Sum_{q>0} \Bigg[ \nabla \cdot \left( \phi\frac{\partial x}{\partial M} q q^T \frac{\partial x}{\partial M}^T\right) \\ &- \phi \left( \nabla \cdot \left(\frac{\partial x}{\partial M} \right) q q^T \frac{\partial x}{\partial M}^T\right) \Bigg]^T,
\end{aligned} \right.
\end{equation}
where $\phi = \phi(\norm{(\partial x/\partial M)q})$. 

The system \eqref{Euler_general} has 12 states in 3-dimensional space, 9 for $\partial M / \partial x \:(x,t)$ and 3 for $u(x,t)$. It resembles the famous Grad 13-moment system \cite{EE_GRAD}, which extends the Euler equations by considering directional-dependent pressure tensor. The last state of the Grad 13-moment system is the inner energy, which does not appear in \eqref{Euler_general}. The reason for this is that we derive a continuous interaction term explicitly from the interaction forces, which is possible only if the forces are defined by long-range potentials. As it was shown in \cite{EE_VLASOV_EULER}, expressing a system with long-range potentials by the Euler equations leads to the solution with zero temperature, therefore the inner energy becomes functionally dependent on the velocity field and its evolution equation can be omitted.

\subsection{Dimensionality reduction}

It appears that in some special cases it is possible to reduce the system \eqref{Euler_general} by considering only one scalar characteristic of a compression in any space point instead of the whole compression tensor. 

Indeed, we define a \textit{density} as a determinant of the compression tensor, $\rho(x,t) := \det \left(\partial M / \partial x \right) (x,t)$. Not only the compression tensor itself, but also its determinant satisfies \eqref{Euler_Mx_PDE}. This nontrivial fact holds because the compression tensor is the Jacobian, and the proof is given in Lemma 1 in the Appendix. Therefore from \eqref{Euler_Mx_PDE}
\begin{equation} \label{Euler_rho_PDE}
\frac{\partial \rho}{\partial t} = -\nabla \cdot \left( \rho u \right).
\end{equation}
This equation is the first of the complete set of Euler equations. 

Unfortunately, the second equation of \eqref{Euler_general} depends on the whole compression tensor and thus it cannot be described only by the means of density. This is reasonable since in general the system can have different pressures in different directions in response to different compressions. Therefore in order to simplify the system we need to assume that the compression can be represented by a single number, i.e. that it is compressed equally in all directions.

\begin{assumption}[\textit{Isotropy}]
Compression tensor $\partial M / \partial x (x,t)$ is isotropic (equal in all directions), thus it can be represented as a rotation matrix multiplied by a scalar. 
\end{assumption}

This assumption looks restricting at first glance, but for the infinitely large system with infinitely many particles the system indeed ''looks the same'' in all directions at every point, thus we can say it is isotropic. 

Assumption 1 has long-lasting implications. Define $l(x,t) := \lambda(\partial x/\partial M (x,t))$, since all the eigenvalues are equal. This variable, called \textit{specific distance}, represents an average distance between two neighbouring particles at point $x$. By definition of the density $\rho = l^{-n}$. Further, $\left\Vert\frac{\partial x}{\partial M} q \right\Vert = l \norm{q}$. Breaking the summation in \eqref{Euler_general} in a sum of all possible lengths $r$ of multiindex vectors, we can rewrite the summation term as
\begin{equation} \label{Isotropy_summation}
\begin{aligned}
\Sum_{r^2 \in \N} \Bigg[ &\nabla \cdot \left( \phi(rl)\frac{\partial x}{\partial M} \Sum_{\substack{q > 0 \\ \norm{q} = r}} \left(q q^T \right) \frac{\partial x}{\partial M}^T\right) - \\ &- \phi(rl) \left( \nabla \cdot \left(\frac{\partial x}{\partial M} \right) \Sum_{\substack{q > 0 \\ \norm{q} = r}} \left(q q^T \right) \frac{\partial x}{\partial M}^T\right) \Bigg]^T.
\end{aligned}
\end{equation}

\begin{proposition}
Given $r$ such that $r^2 \in \N$, the summation over all outer products of multiindices of a length $r$ is proportional to the identity matrix, i.e. there exists $\beta(r)$ such that
\begin{equation} \label{Euler_qq}
\Sum_{\substack{q > 0 \\ \norm{q} = r}} q q^T = \beta(r) I.
\end{equation}
\end{proposition}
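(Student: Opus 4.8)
The plan is to exploit the rich symmetry of the index set $\{q \in \Z^n : \norm{q} = r\}$, which is invariant both under permutations of the coordinates and under sign flips of any subset of coordinates (the hyperoctahedral group). The only complication is that the summation in \eqref{Euler_qq} is restricted to lexicographically positive $q$, a constraint that is \emph{not} preserved by individual sign flips. I would remove this obstacle first: since $q q^T = (-q)(-q)^T$ and since for every nonzero $q$ exactly one of $q,-q$ is positive, one has
\begin{equation}
\Sum_{\substack{q > 0 \\ \norm{q} = r}} q q^T = \frac{1}{2} \Sum_{\substack{q \not= 0 \\ \norm{q} = r}} q q^T,
\end{equation}
so that it suffices to prove the claim for the fully symmetric sum on the right.

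Next I would show that the off-diagonal entries vanish. Fixing two distinct coordinates $a \not= b$, consider the involution $\sigma$ of $\Z^n$ that flips the sign of the $a$-th coordinate only. Since $\sigma$ preserves $\norm{q}$, it is a bijection of the index set $\{q \not= 0 : \norm{q} = r\}$ onto itself, while it sends the summand $q_a q_b$ to $-q_a q_b$. Reindexing the sum by $\sigma$ therefore shows that the $(a,b)$ entry equals its own negative and hence is zero.

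For the diagonal entries I would use permutation symmetry: for any two coordinates $a$ and $b$, the transposition swapping them is again a norm-preserving bijection of the index set, and it exchanges the summands $q_a^2$ and $q_b^2$. Hence all diagonal entries share a common value, which I call $\beta(r)$, and the matrix is $\beta(r) I$. As a bonus, taking the trace of \eqref{Euler_qq} gives $n\beta(r) = \sum_{q>0,\,\norm{q}=r}\norm{q}^2 = r^2\,|\{q>0:\norm{q}=r\}|$, yielding the explicit value $\beta(r) = \tfrac{r^2}{n}\,|\{q>0:\norm{q}=r\}|$, although only existence is needed.

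The main obstacle here is conceptual rather than computational: it is recognizing that the lexicographic restriction $q>0$ breaks the single-coordinate sign-flip symmetry on which the vanishing argument relies, and resolving this by first passing to the full symmetric sum through the $\pm q$ pairing. Once that reduction is in place, the vanishing of the off-diagonal terms and the equality of the diagonal terms follow immediately from the two symmetries, and no detailed calculation is required.
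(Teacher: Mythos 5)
Your proof is correct, and it takes a genuinely different (and cleaner) route than the paper's. The paper works directly with the lexicographically constrained sum and builds involutions that preserve positivity: for the off-diagonal entry $(k,j)$ it flips the sign of $q_{\max(k,j)}$, which leaves the leading nonzero coordinate of $q$ (and hence positivity) untouched, and for the diagonal it swaps magnitudes while keeping signs, i.e. $\bar q_k = \sgn(q_k)\,|q_j|$, $\bar q_j = \sgn(q_j)\,|q_k|$. You instead symmetrize first, replacing the sum over $q>0$ by one half of the sum over all nonzero $q$ with $\norm{q}=r$ via the $\pm q$ pairing, after which the unconstrained hyperoctahedral symmetries (single-coordinate sign flips for the off-diagonal entries, coordinate transpositions for the diagonal ones) apply with no positivity constraint to respect; both routes finish with the same trace identity giving $\beta(r) = \frac{r^2}{n}\,\#\{q>0 : \norm{q}=r\}$. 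What your preliminary reduction buys is robustness: every bijection you invoke is manifestly well defined on the full index set. By contrast, the paper's diagonal swap, as literally written, degenerates when one of the two coordinates vanishes, since $\sgn(0)=0$ sends, e.g., $q=(1,0)$ to $\bar q=(0,0)$, which is not in the index set; that step needs an explicit sign convention to be repaired, whereas your version has no such edge case. What the paper's direct approach buys in exchange is only the absence of the preliminary reduction step.
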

\begin{proof}
First of all, we will show that all nondiagonal elements in \eqref{Euler_qq} are zero. Indeed, for any positive $q$ its contribution to $kj$-th element of matrix \eqref{Euler_qq} is given by $q_k q_j$. But for any $k \not = j$ we can pick $\bar q$ such that it equals $q$ except $\bar q_{\max(k,j)}~=~-q_{\max(k,j)}$. In this case $\bar q$ is also positive and thus is included into the summation, while the contribution to $kj$-th element of \eqref{Euler_qq} has opposite sign. Therefore all nondiagonal elements of \eqref{Euler_qq} are zero.

Further, all diagonal elements of \eqref{Euler_qq} are equal. This can be proven by analogous argument. Indeed, we can take a positive $q$ and look at the elements $q_k^2$ and $q_j^2$. Then $\bar q$ which is equal to $q$ except for $\bar q_k = \sgn(q_k) |q_j|$ and $\bar q_j = \sgn(q_j) |q_k|$ is also positive, but swaps the contributions between $k$-th and $j$-th diagonal elements. Thus all the contributions to the diagonal elements are equal.
Finally, 
\begin{equation} 
\trace \Sum_{\substack{q > 0 \\ \norm{q} = r}} q q^T = \Sum_{\substack{q > 0 \\ \norm{q} = r}} q^T q = r^2 \cdot \#_r q = n \beta(r),
\end{equation}
where $\#_r q$ denotes the number of positive multiindices $q$ with length $r$ and we define $\beta(r) = r^2 / n \cdot \#_r q$.
It is worth noticing that by \cite{EE_SPHERE} the average approximate behaviour of the number of positive multiindices $q$ with length $r$ is $\#_r q \propto r^{n - 1}$ as $r \to +\infty$, thus $\beta(r) \propto r^{n}$.
\end{proof}

By Assumption 1 
\begin{equation} \label{Isotropy_tensor}
\frac{\partial x}{\partial M} \frac{\partial x}{\partial M}^T = l^2 I.
\end{equation}
Using Proposition 2 and \eqref{Isotropy_tensor}, \eqref{Isotropy_summation} becomes
$$
\Sum_{r^2 \in \N} \beta(r) \Bigg[ \nabla \cdot \left( \phi(rl)l^2 I\right) - \phi(rl) \left( \nabla \cdot \left(\frac{\partial x}{\partial M} \right) \frac{\partial x}{\partial M}^T\right) \Bigg]^T.
$$
The value inside of the square brackets can be simplified further. Indeed, by \eqref{Prop_1_scalar} it is possible to inject density inside, which gives
$$
\footnotesize{
\begin{aligned}
\frac{1}{\rho}\Bigg[ &\nabla \cdot \left( \rho\phi(rl)l^2 I\right) - \frac{\partial \rho}{\partial x} \phi(rl)l^2 I - \phi(rl) \rho \left( \nabla \cdot \left(\frac{\partial x}{\partial M} \right) \frac{\partial x}{\partial M}^T\right) \Bigg]^T \\ &= \frac{1}{\rho}\Bigg[ \nabla \cdot \left( \rho\phi(rl)l^2 I\right) - \phi(rl) \left( \nabla \cdot \left(\rho \frac{\partial x}{\partial M} \right) \frac{\partial x}{\partial M}^T\right) \Bigg]^T.
\end{aligned}}
$$
Finally, the second term in the square brackets appears to be zero, since Lemma 2 in the Appendix proves that $\nabla~\cdot~\left(\rho\frac{\partial x}{\partial M}\right)~\frac{\partial x}{\partial M}^T~=~0$. Using this Lemma and the fact that $\nabla \cdot (\rho\phi(rl)l^2 I) = \nabla (\rho\phi(rl)l^2)$, we can define the \textit{pressure}:
\begin{equation} \label{Euler_P}
P = \Sum_{r^2 \in \N} \beta(r) \rho \phi(l r) l^2 = \Sum_{r^2 \in \N} \frac{\beta(r)}{r} l^{1-n} f(l r).
\end{equation}
Note that the pressure is well-defined since the sum is convergent by the  property \eqref{Euler_force_condition}. With this definition, the system \eqref{Euler_general} together with \eqref{Euler_rho_PDE} turns into the famous \textit{Euler equations}:
\begin{equation} \label{Euler_PDE}
\left\{
\begin{aligned}
\frac{\partial \rho}{\partial t} &= -\nabla \cdot \left( \rho u \right), \\
\frac{\partial u}{\partial t} &= -\frac{\partial u}{\partial x} u - \frac{\nabla P}{\rho}^T.
\end{aligned}
\right.
\end{equation}
Therefore the following theorem was proven:
\begin{theorem}
There exists a valid continuation process which leads from the Newtonian system \eqref{Euler_ODE} to the Euler equations \eqref{Euler_PDE} under the assumption that the system is locally isotropic in every point in space.
\end{theorem}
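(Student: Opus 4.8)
The plan is to recognize that the statement is a capstone: the explicit chain of manipulations carried out above already exhibits the required map, so the proof reduces to certifying that each link is a valid continuation of order of accuracy $0$ (in the sense of Definitions 2 and 3) and that their composition reproduces \eqref{Euler_PDE} verbatim. First I would treat the coordinates $x_i$ as the states, index them by $i \in \Z^n$, and pass to the continuous index $M$ as in Section~\ref{SEC_pde_index}. Differentiating the inverse-function identity $M(x(M,t),t)\equiv M$ yields the Jacobian relations \eqref{Euler_M_t}--\eqref{Euler_M_x}; substituting the kinematic law $\dot x = v$ gives \eqref{Euler_M_PDE} and, after one further spatial derivative, the transport equation \eqref{Euler_Mx_PDE} for the compression tensor $\partial M/\partial x$. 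No force dynamics enter here, so validity of this block is immediate.

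Next I would continualize the force balance. Writing the second equation of \eqref{Euler_ODE} in the symmetric telescoped form \eqref{Euler_v_ODE}, I apply the multidimensional order-$0$ continuation of Section~\ref{SEC_multidim} first to the displacements $x_{i+q}-x_i$ and $x_i-x_{i-q}$, both of which map to $(\partial x/\partial M)\,q$, and then, composing through the nonlinear force as a computational graph, to the force differences themselves. This produces \eqref{Euler_v_PDE_1}, written entirely through $\partial x/\partial M$ and the scaled magnitude $\phi$. Converting the Lagrangian velocity $v(M,t)$ into the Eulerian field $u(x,t)=v(M(x,t),t)$ through the chain rule and \eqref{Euler_M_PDE}, and invoking Proposition 1 to recast the interaction term as a divergence, collects everything into the general system \eqref{Euler_general}.

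For the reduction I would set $\rho := \det(\partial M/\partial x)$ and invoke Lemma 1 to show that the determinant inherits the transport equation, producing the continuity law \eqref{Euler_rho_PDE}. Then, under the isotropy Assumption 1, relation \eqref{Isotropy_tensor} holds, Proposition 2 collapses each directional sum $\sum_{\norm{q}=r} q q^T$ into $\beta(r) I$, and Lemma 2 annihilates the anisotropic residual $\nabla\cdot(\rho\,\partial x/\partial M)(\partial x/\partial M)^T$. The surviving gradient term then defines the pressure \eqref{Euler_P}, whose convergence is guaranteed by the decay condition \eqref{Euler_force_condition}, and the momentum equation reduces to the final Euler form \eqref{Euler_PDE}.

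The main obstacle lies in this reduction block. The passage from the fully tensorial interaction term of \eqref{Euler_general} to a scalar pressure gradient is legitimate only because isotropy forces $\sum_{\norm{q}=r} q q^T \propto I$ (Proposition 2) and because Lemma 2 exactly cancels the leftover divergence of the compression tensor; I would treat these two facts as the crux. The remaining subtlety is justifying the interchange of the infinite $q$-summation with the continuation, which rests on the decay assumption \eqref{Euler_force_condition}. Everything else is bookkeeping inherited from the order-$0$ continuation already established for the linear and computational-graph cases.
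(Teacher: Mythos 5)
Your proposal is correct and follows essentially the same route as the paper, whose proof of this theorem is precisely the derivation chain you reconstruct: the index-function identities \eqref{Euler_M_t}--\eqref{Euler_Mx_PDE}, the order-$0$ continuation of the telescoped force balance \eqref{Euler_v_ODE} into \eqref{Euler_v_PDE_1}, Proposition 1 to reach \eqref{Euler_general}, then Lemma 1 for the continuity equation and Proposition 2 together with Lemma 2 under Assumption 1 to collapse the interaction term into the pressure \eqref{Euler_P}. You also correctly locate the crux in the reduction block (isotropy plus Lemma 2) and the role of the decay condition \eqref{Euler_force_condition} in making the pressure sum well-defined, which is exactly where the paper places its weight.
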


\begin{remark}[Non-complete interaction topologies]
In the original ODE system \eqref{Euler_ODE} we assumed that an interaction exists between every pair of particles, i.e. that the topology of interactions is all-to-all. In general in order to obtain \eqref{Euler_ODE} it would be sufficient to use any topology for which the isotropy required in Assumption 1 is possible. The difference in topologies would modify the definitions of density $P(x,t)$ in \eqref{Euler_P}. 

For example, for the grid topology with equations given by
\begin{equation} \label{Euler_ODE_grid}
\left\{
\begin{aligned}
\dot x_i &= v_i, \\
\dot v_i &= \Sum_{k=1}^n \left( F(x_i - x_{i-e_k}) - F(x_i - x_{i+e_k}) \right),
\end{aligned}
\right.
\end{equation}
where $e_k$ denotes the $k$-th basis vector of $\R^n$, the continuation renders the same Euler equations \eqref{Euler_PDE} with the pressure given by $P = f(l) / l^{n-1}$.
\end{remark}

\section{Control of Robotic Swarm}

In this section we will demonstrate how the continuation method described above can help in the analysis and design of control laws for large-scale systems. We will do it by using an example of a robotic swarm, i.e. a formation of robots whose goal is to follow some desired trajectory while passing through obstacles and preserving relative agents' positions.

Control of robotic formations is an extensively studied topic, see recent reviews \cite{RS_REVIEW15, RS_REVIEW18}. However most of the methods rely on the graph-theoretic properties of interaction topology and on simple linear controllers to provide stability. A PDE approach was taken in \cite{RS_BIRDS95} where the Euler PDE with diffusion terms was used to model the flocks of birds. The authors proposed a PDE to describe the behaviour of agents and analyzed it to study a symmetry breaking which leads to a coherent movement of birds. Similar PDE model was used to control 3D agent formation with 2D disc communication topology via backstepping in \cite{RS_KRSTIC}. Another interesting concept was used in \cite{RS_PdE06, RS_PdE08}, namely the \textit{Partial difference Equations} (PdEs), which are an analogue of ordinary Partial Differential Equations defined on graphs. With the help of Lyapunov analysis within PdE formalism it was proven that the linear diffusion controller stabilizes the formation.

Contrary to previous works we will base our analysis on the continuation procedure, rigorously introducing a PDE to describe a formation of drones. We will study this PDE and recover a nonlinear local control law which, being applied to the agents, forces the whole formation to follow the desired density profile.

\subsection{Continuation and PDE Control}

Let us start from a system of drones having double integrator dynamics:
\begin{equation} \label{Drones_ODE}
\ddot x_i = \tau_i.
\end{equation}
Here $x_i\in\R^n$ is a position of the $i$-th drone in $n$-dimensional space and $\tau_i\in\R^n$ is a control we want to design. The drones are enumerated with multiindices $i\in\Z^n$. Define $v_i = \dot x_i$. Similarly to the previous section we introduce multiindex function $M(x,t)$ such that $M(x_i,t) \equiv i$ and then perform a continuation. The resulting system is
\begin{equation} \label{Drones_PDE}
\left\{ 
\begin{aligned}
\frac{\partial \rho}{\partial t} &= -\nabla\cdot(\rho u), \\
\frac{\partial u}{\partial t} &= -\frac{\partial u}{\partial x} u + \tau(x,t), 
\end{aligned} \right.
\end{equation}
where $\tau(x,t) = \tau(M(x,t),t)$ is a continuation of the control $\tau_i$.

Now let us formulate the desired system which will be used as a reference the real formation should converge to. Given a velocity profile $u_d(x)$, we define the desired density $\rho_d(x,t)$ to follow this velocity profile. Essentially this means ''desired agents'' have single-integrator dynamics. Note that in general $u_d$ can be dependent on time but we don't consider it for simplicity of writing.

Thus we assume the desired system is governed by
\begin{equation} \label{Drones_des}
\frac{\partial \rho_d}{\partial t} = -\nabla\cdot(\rho_d u_d).
\end{equation}
Our goal is to derive $\tau(x,t)$ such that $\rho \to \rho_d$. First, direct calculations from \eqref{Drones_PDE} and \eqref{Drones_des} lead to the following systems in terms of flows $(\rho u)$ and $(\rho_d u_d)$:
\begin{equation}
\begin{aligned}
\frac{\partial (\rho u)}{\partial t} &= -\nabla\cdot(\rho u) u -\rho \frac{\partial u}{\partial x} u + \rho \tau(x,t), \\
\frac{\partial (\rho_d u_d)}{\partial t} &= -\nabla\cdot(\rho_d u_d) u_d.
\end{aligned}
\end{equation}
Define the deviation from the desired density $\tilde\rho = \rho - \rho_d$. Then the second-order equation for the deviation is
$$
\frac{\partial^2 \tilde\rho}{\partial t^2} = \nabla \cdot \left[ \nabla\cdot(\rho u) u - \nabla\cdot(\rho_d u_d) u_d + \rho \frac{\partial u}{\partial x} u - \rho \tau(x,t) \right].
$$
In order to cancel the nonlinear terms, define now the control $\tau$ as
\begin{equation} \label{Drones_control_continuous}
\begin{aligned}
\tau = \frac{\partial u}{\partial x} u + \frac{1}{\rho}\Big[ \nabla&\cdot(\rho u) u - \nabla\cdot(\rho_d u_d) u_d +{} \\ &{}+ \alpha (\rho_d u_d - \rho u) + \beta \nabla(\rho_d - \rho)^T \Big],
\end{aligned}
\end{equation}
where $\alpha$ and $\beta$ are some positive gains. Then the equation for the density deviation transforms into
\begin{equation} \label{Drones_wave}
\frac{\partial^2 \tilde\rho}{\partial t^2} = -\alpha \frac{\partial \tilde\rho}{\partial t} + \beta \nabla^2 \tilde\rho.
\end{equation}
This equation is a wave equation with damping and thus it is asymptotically stable if $\tilde\rho = 0$ on the boundary of the domain \cite{WAVE}. Choosing a desired system such that $\rho_d = 0$ on the boundary and using a continuation of $\rho$ such that $\rho = 0$ on the boundary ensures satisfaction of the boundary condition. 

\subsection{Discretization of the control}

Formula \eqref{Drones_control_continuous} for PDE \eqref{Drones_PDE} is local by its nature, but it should be discretized to be implemented on every agent of the original ODE \eqref{Drones_ODE}. One particular discretization is described next.

First of all, for the agent $i$ define a matrix $G_i$ as a discretization of the compression tensor:
\begin{equation} \label{Drones_discr_G}
[G_i]_j = (x_{i+e_j} - x_{i-e_j})/2 \approx \frac{\partial x}{\partial M_j}(x_i,t),
\end{equation}
where $e_j$ is the $j$-th unit basis vector and $[G_i]_j$ represent the $j$-th column of $G_i$. The matrix $G_i$ depends on the positions of $2n$ neighbouring agents of the $i$-th agent. In the same way as $G_i$ we define a matrix $W_i$ representing a velocity Jacobian:
\begin{equation} \label{Drones_discr_W}
[W_i]_j = (v_{i+e_j} - v_{i-e_j})/2 \approx \frac{\partial u}{\partial M_j}(x_i,t).
\end{equation}

Now we can write formulas for all terms inside of \eqref{Drones_control_continuous} depending on the real system:
\begin{equation} \label{Drones_discretizations}
\begin{aligned}
\text{1).}& \quad
\frac{\partial u}{\partial x} u = \frac{\partial u}{\partial M} \frac{\partial M}{\partial x} u \approx W_i G_i^{-1} v_i, \\
\text{2).}& \quad
\nabla \cdot u = \Sum_{j=1}^n \frac{\partial u_j}{\partial M} \frac{\partial M}{\partial x_j} \approx \Sum_{j=1}^n [W_i^T]_j \cdot [G_i^{-1}]_j, \\
\text{3).}& \quad
\rho \approx 1 / \det G_i , \\
\text{4).}& \quad
\nabla \rho \approx -\rho^2 \nabla (\det G_i) \approx -\rho^2 \frac{\partial (\det G_i)}{\partial M} G_i^{-1}, \\
\end{aligned}
\end{equation}
where the gradient of the determinant $\det G_i$ should be computed according to the determinant formula, using second derivatives of the positions discretized similarly to \eqref{Drones_discr_G}:
$$
\begin{aligned}
\frac{\partial^2 x}{\partial M_j\partial M_k}(x_i,t) &\approx (x_{i+e_j+e_k} + x_{i-e_j-e_k} - x_{i+e_j-e_k} - x_{i-e_j+e_k}) / 4, \\
\frac{\partial^2 x}{\partial M_j^2}(x_i,t) &\approx x_{i+e_j}- 2x_{i} + x_{i-e_j}.
\end{aligned}
$$
Since the gradient of the determinant depends on the second derivatives, in total each agent requires information about the velocities of its $2n$ neighbouring agents and the positions of its $2 n^2$ neighbouring agents, including diagonal ones.

Finally, substituting \eqref{Drones_discretizations} into \eqref{Drones_control_continuous}, the formula for the control action $\tau_i$ appears as
\begin{equation} \label{Drones_control}
\begin{aligned}
\tau_i &= \left[W_i G_i^{-1} + \Sum_{j=1}^n [W_i^T]_j \cdot [G_i^{-1}]_j - \alpha \right] v_i +{} \\ &{}+ \left[\beta I - v_i v_i^T\right] \frac{1}{\det G_i} G_i^{-T} \frac{\partial (\det G_i)}{\partial M}^T  +{} \\ & {}+ \det G_i \Big[\alpha \rho_d u_d + (\beta I - u_d u_d^T) \nabla \rho_d^T - \rho_d (\nabla\cdot u_d) u_d \Big].
\end{aligned}
\end{equation}

\subsection{Boundary conditions}

For the system \eqref{Drones_wave} to converge to zero proper boundary conditions should be used. Namely, the continuation should be chosen such that $\rho = 0$ outside of the formation. As it was shown in \ref{SEC_boundary}, boundary conditions for PDE correspond to ''ghost agents'' in the ODE case. In particular, information about neighbour agents is used in \eqref{Drones_discr_G} and \eqref{Drones_discr_W}. Therefore specifying boundary conditions means specifying positions $x_{i \pm e_j}$ and velocities $v_{i \pm e_j}$ for the nonexisting agents.

Assume agent $i-e_j$ is a ghost agent. One natural choice, which we will use for velocities, is to take $v_{i-e_j} = 2 v_i - v_{i+e_j}$. Being substituted in \eqref{Drones_discr_W} this leads to an approximation of the velocity gradient based solely on the $i$ and $i+e_j$ agents.

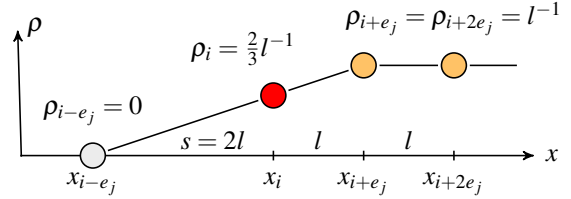
\begin{figure}[h]
\begin{center}
\begin{tikzpicture}[modal]
	
	\path[->] (-0.95,-0.05) edge node[pos=1,right]{$\rho$} (-1.0,1.7);
	\path[->] (-1.0,0) edge node[pos=1,right]{$x$} (5.9,0);
	\path[-] (0,-0.1) edge node[below, yshift=-0.1cm]{$x_{i-e_j}$} (0,0.1);
	\path[-] (2.4,-0.1) edge node[below, yshift=-0.1cm]{$x_{i}$} (2.4,0.1);
	\path[-] (3.6,-0.1) edge node[below, yshift=-0.1cm]{$x_{i+e_j}$} (3.6,0.1);
	\path[-] (4.8,-0.1) edge node[below, yshift=-0.1cm]{$x_{i+2e_j}$} (4.8,0.1);
	
	\node[circle,draw,minimum size=0.1cm,fill=gray!15] at (0,0) (ghost) {};
	\node[circle,draw,minimum size=0.1cm,fill=red] at (2.4,0.8) (x0) {};
	\node[circle,draw,minimum size=0.1cm,fill=red!40!yellow!60] at (3.6,1.2)  (x1){};
	\node[circle,draw,minimum size=0.1cm,fill=red!40!yellow!60] at (4.8,1.2)  (x2){};
	\node at (5.8,1.2)  (x3){};
	
	\path[-] (ghost) edge (x0);
	\path[-] (x0) edge (x1);
	\path[-] (x1) edge (x2);
	\path[-] (x2) edge (x3);
	
	\node at (1.6,0.2) {$s=2l$};
	\node at (3.0,0.2) {$l$};
	\node at (4.2,0.2) {$l$};
	
	\node [above=0.1cm of ghost] {$\rho_{i-e_j} = 0$};
	\node [above=0.1cm of x0, xshift=-0.4cm] {$\rho_{i} = \frac{2}{3}l^{-1}$};
	\node [above=0.1cm of x2] {$\rho_{i+e_j} = \rho_{i+2e_j} = l^{-1}$};
	
\end{tikzpicture}
\end{center}
\vskip -10pt
\caption{Left boundary of the system \eqref{Drones_ODE} with control \eqref{Drones_control}. Agent $i$ is on the boundary, the position of the ''ghost agent'' $i-e_j$ is chosen such that $\rho$ linearly goes to zero at $x_{i-e_j}$.} \label{FIG_boundary_drones}
\end{figure}

This idea can't be used for $x_{i-e_j}$ since in this case the compression tensor \eqref{Drones_discr_G} will not ''feel'' that the drone $i$ is on the border. Instead we will use such an approximation that the density near the border will linearly diminish to zero, see Fig.~\ref{FIG_boundary_drones}. Namely, let us look at 1D case and fix $i$-th agent to be on the left border. Assume further that the distance between each pair of existing agents is constant and equal to $l$. Then $\rho_{i+e_j} = l^{-1}$. Also we define $\rho_{i-e_j} = 0$ such that the ghost agent has its density zero. Define an unknown distance $s := x_i - x_{i-e_j}$. Then asking for a linear dependency of a density on position, we have necessarily 
$$
\rho_i = \frac{l\rho_{i-e_j} + s \rho_{i+e_j}}{l + s} = \frac{s}{l(l+s)}.
$$
But by \eqref{Drones_discr_G} $\rho_i = 2 / (l + s)$, which immediately gives the answer $s = 2l$, or
\begin{equation}
x_{i-e_j} = x_i + 2(x_i - x_{i+e_j}) = 3 x_i - 2 x_{i+e_j}.
\end{equation}
This finalizes the formulation of the boundary conditions and thus the correct implementation of \eqref{Drones_control}.

\subsection{Numerical Simulation}

To demonstrate the control policy \eqref{Drones_control} we performed a numerical simulation of a cubic formation of 512 drones in 3D space. The goal was to reach a cubic formation, fly through a window and restore the cubic formation after the maneuver. 

Assume the center of the window is placed at the point $(x_0, 0, 0)$, and the formation should fly through it starting from the origin. The desired velocity field $u_d(x,y,z)$ able to fulfill the task was constructed as 
$$
u_{d_x} = 1, \quad u_{d_{y|z}} = 0.05 \atan(x-x_0) e^{-\frac{(x-x_0)^2}{100}} y|z,
$$
where $y|z$ denotes $y$ or $z$, see the left panel of Fig.~\ref{FIG_vel_l2} for the streamlines projected on the $x$-$y$ plane. For simplicity the desired system \eqref{Drones_des} was simulated by first-order integrators following the desired velocity profile, and the density $\rho_d(x,t)$ was interpolated between agents.

Both the desired system \eqref{Drones_des} and the real system \eqref{Drones_ODE} were simulated for the cubic formation of $8\times 8\times 8$ drones using Euler method. The initial positions for the real system were multiplied by 2 in comparison to the desired system and a uniform noise $U(-2,2)$ was added. The control gains were chosen as $\alpha = 3$ and $\beta = 100$. The convergence of the real density to the desired one is shown on the right panel of Fig.~\ref{FIG_vel_l2} and snapshots of the simulation are presented in Fig.~\ref{FIG_full}. It is clear that the real formation, being heavily disturbed in the beginning, converges to the desired shape in less than 5 seconds and then follows the desired pattern, successfully passing through the window.

\begin{figure}%[h]
\begin{center}
\includegraphics[width=\linewidth]{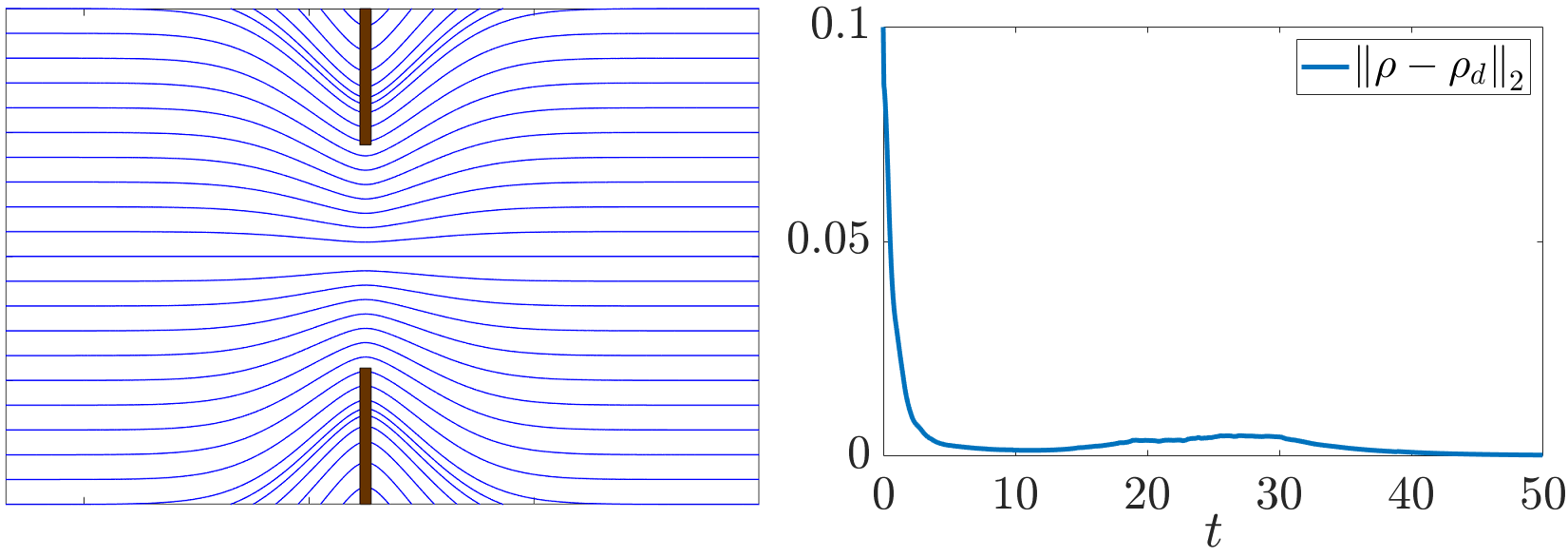}
\caption{Left: streamlines of the desired velocity field $u_d(x,y)$. Right: convergence of the $L_2$ norm of the density deviation.} \label{FIG_vel_l2}
\end{center}
\end{figure}

\begin{figure}%[h]
\begin{center}
\includegraphics[width=\linewidth]{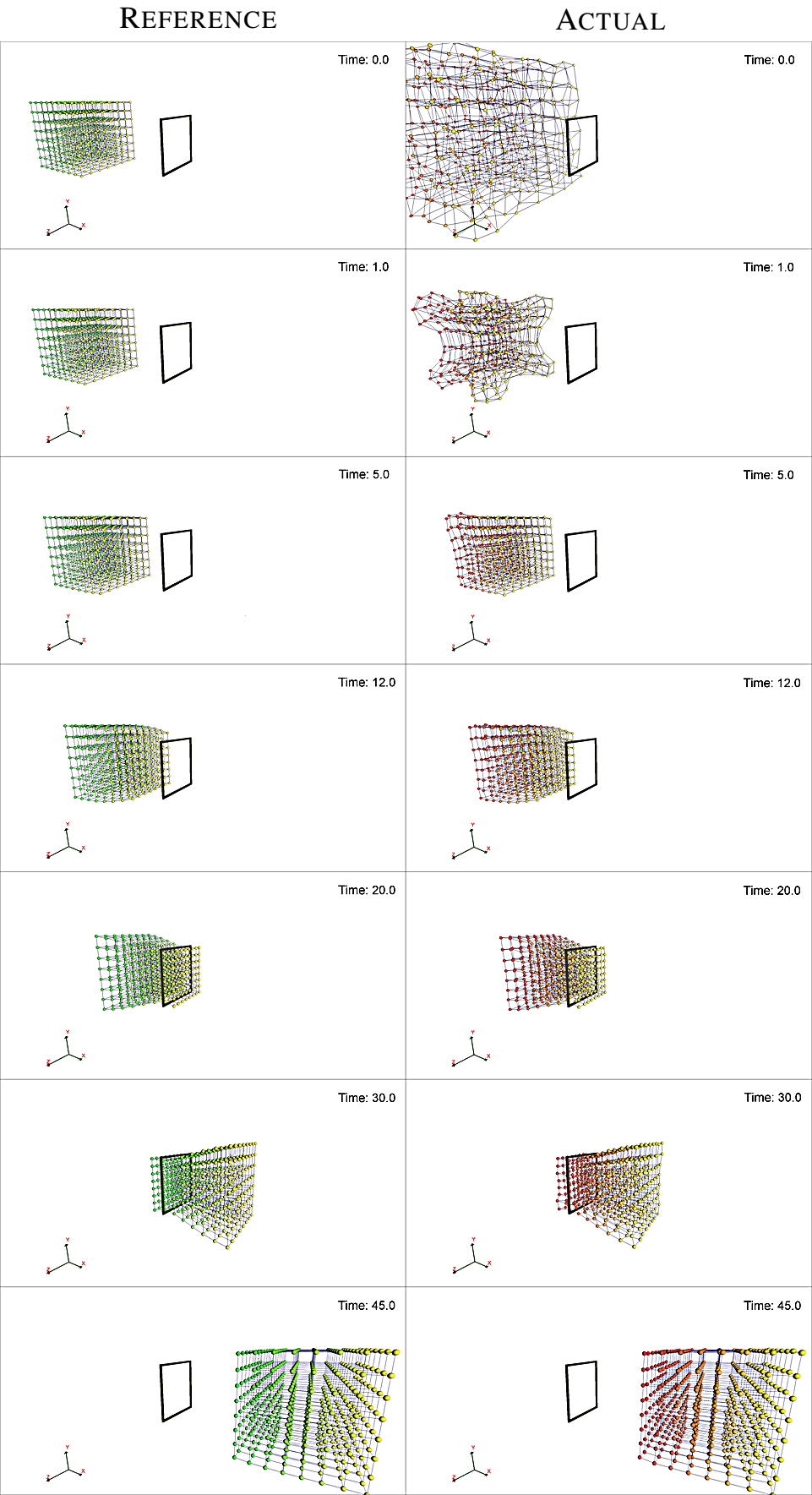}
\caption{Simulation of drones flying through window. Rows correspond to times $t=\{0s,1s,5s,12s,20s,30s,45s\}$. Left column, reference: desired system \eqref{Drones_des}, governed by single integrators. Right column, actual: heavility perturbed real system \eqref{Drones_ODE} with control \eqref{Drones_control} which converges to the desired one.} \label{FIG_full}
\end{center}
\end{figure}

\section{Conclusion}

We presented a general process of transformation of ODE systems into their PDE counterparts, defining the continuation to be valid if the original ODE system could be obtained from the PDE version by a correct discretization. We further showed that the spectrum of PDE converges to the spectrum of ODE. The continuation method was then elaborated for many classes of systems including nonlinear, multidimensional and space- and time-varying. Based on this method, new continuous models can be derived and further utilized for analysis and control purposes.

As an example we used the continuation to show how the Euler equations for compressible fluid can be derived from the newtonian particle interactions, providing more intuition into Hilbert's 6th problem. The same continuation was then used to describe a robot formation flying through window. We developed a control algorithm to stabilize a desired trajectory based on a continuous representation of the formation. This algorithm is distributed as every robot requires information only about neighbouring robots.

It would be desirable to study further the continuation method, namely to provide quantitative measures of how close is a PDE solution compared to the original ODE one.

\appendix

\begin{lemma}
Let $J(x,t) \in \R^{n\times n}$ be the Jacobian matrix of function $M(x,t)$. Let $J(x,t)$ satisfies the dynamic equation
\begin{equation} \label{App_3_J}
\frac{\partial J}{\partial t} = -\frac{\partial (Ju)}{\partial x},
\end{equation} 
where $u = u(x,t)$ is some vector field. Then the determinant $\det J$ satisfies the same equation:
\begin{equation} \label{App_3_detJ}
\frac{\partial \det J}{\partial t} = -\frac{\partial}{\partial x} \cdot \left(\det J \cdot u \right).
\end{equation} 
\end{lemma}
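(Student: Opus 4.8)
The plan is to differentiate $\det J$ in time by Jacobi's formula and reduce everything to cofactor identities, the only essential ingredient being that $J$ is a \emph{Jacobian} matrix. Write $C_{ij}$ for the $(i,j)$ cofactor of $J$, so that $\partial \det J/\partial J_{ij}=C_{ij}$ and the adjugate obeys the column relation $\sum_i J_{ik}C_{ij}=\delta_{kj}\det J$. Applying Jacobi's formula in time and substituting the hypothesis \eqref{App_3_J} written componentwise as $\partial_t J_{ij}=-\partial_{x_j}(Ju)_i$ gives
\begin{equation*}
\frac{\partial \det J}{\partial t} = \Sum_{i,j} C_{ij}\frac{\partial J_{ij}}{\partial t} = -\Sum_{i,j} C_{ij}\frac{\partial (Ju)_i}{\partial x_j}.
\end{equation*}

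First I would expand $(Ju)_i=\sum_k J_{ik}u_k$ and split the product rule into a piece carrying $\partial u/\partial x$ and a piece carrying $\partial J/\partial x$. The first piece collapses at once via the column relation $\sum_i J_{ik}C_{ij}=\delta_{kj}\det J$, producing exactly $-\det J\,(\nabla\cdot u)$, the advection-of-divergence part of the target right-hand side.

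The second piece, $-\sum_{i,j,k} C_{ij}\,\partial_{x_j}J_{ik}\,u_k$, is the crux, and turning it into $-(\nabla\det J)\cdot u$ is where the Jacobian structure is used. Since $J_{ik}=\partial M_i/\partial x_k$, Schwarz's theorem on mixed partials yields the symmetry $\partial_{x_j}J_{ik}=\partial_{x_k}J_{ij}$. Inserting this and comparing with the spatial Jacobi formula $\partial_{x_k}\det J=\sum_{i,j}C_{ij}\,\partial_{x_k}J_{ij}$ identifies the second piece as $-\sum_k(\partial_{x_k}\det J)\,u_k$. Adding the two contributions gives $\partial_t\det J=-(\nabla\det J)\cdot u-\det J\,(\nabla\cdot u)=-\nabla\cdot(\det J\,u)$, which is \eqref{App_3_detJ}.

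The main obstacle is precisely this integrability step: for a generic matrix field $J$ satisfying \eqref{App_3_J} the determinant would \emph{not} obey the scalar transport equation, so the symmetry of second derivatives is indispensable and deserves to be flagged explicitly. A more compact alternative would be to cite the Piola identity $\sum_j\partial_{x_j}C_{ij}=0$ (itself a consequence of that same symmetry): fed into the product rule it rewrites $\sum_{i,j}C_{ij}\,\partial_{x_j}(Ju)_i$ as the total divergence $\sum_j\partial_{x_j}\sum_i C_{ij}(Ju)_i$, after which the column relation gives $\sum_i C_{ij}(Ju)_i=u_j\det J$ and the claim follows in a single line.
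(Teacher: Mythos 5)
Your proof is correct, and it reaches the paper's conclusion by a noticeably cleaner route. Both arguments rest on the same two structural facts and split the right-hand side the same way (an advection piece $-(\nabla \det J)\cdot u$ plus a compression piece $-\det J\,(\nabla\cdot u)$), and both invoke the Jacobian structure at exactly the same point, via Schwarz symmetry $\partial_{x_j}J_{ik}=\partial_{x_k}J_{ij}$. The difference is in the machinery: the paper works bare-handed from the Leibniz permutation expansion $\det J=\sum_{\sigma}\sgn(\sigma)\prod_i J_{\sigma_i,i}$, recovers the advection piece by the chain rule, gets $-\det J\,\nabla\cdot u$ from the $k=j$ terms, and kills the cross terms $k\neq j$ by an explicit pairing of permutations differing by a transposition. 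You replace that combinatorial cancellation with the adjugate column relation $\sum_i J_{ik}C_{ij}=\delta_{kj}\det J$ — which is precisely the ``determinant with a repeated column vanishes'' fact that the paper's pairing argument proves by hand — and replace the Leibniz differentiation with Jacobi's formula, both in time and in space. Your Piola-identity variant ($\sum_j \partial_{x_j}C_{ij}=0$) goes one step further and compresses the whole lemma to a single line, making transparent that the result is a divergence-structure statement rather than a computation. What each approach buys: the paper's proof is fully self-contained, using nothing beyond the definition of the determinant, which suits an appendix lemma; yours is shorter, modular, and has the merit of flagging explicitly that integrability (the existence of $M$ with $J=\partial M/\partial x$) is indispensable — for a generic matrix field satisfying the hypothesis the conclusion is false — a point the paper uses but does not emphasize.
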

\begin{proof}
First of all let us rewrite \eqref{App_3_J} for one element $J_{ik}$ of the matrix $J$:
\begin{equation} \label{App_3_J_ik}
\begin{aligned}
\frac{\partial J_{ik}}{\partial t}& = -\frac{\partial (J_i \: u)}{\partial x_k} = -\Sum_{j=1}^n \frac{\partial^2 M_i}{\partial x_k \partial x_j} u_j -\Sum_{j=1}^n J_{ij} \frac{\partial u_j}{\partial x_k} = \\ &= -\Sum_{j=1}^n \frac{\partial J_{ik}}{\partial x_j} u_j -\Sum_{j=1}^n J_{ij} \frac{\partial u_j}{\partial x_k},
\end{aligned}
\end{equation} 
where we used the fact that $J = \partial M/ \partial x$.

Now let us recall the definition of the determinant: $\det J = \sum_{\sigma} \sgn(\sigma) \prod_{i=1}^n J_{\sigma_i, i}$, where $\sigma$ is a permutation of the set $\{1,2...n\}$ and $\sum_\sigma$ is taken over all possible permutations, with $\sgn(\sigma)$ being the sign of the permutation. Let us take the time derivative and then substitute \eqref{App_3_J_ik}:
\begin{equation} \label{App_3_det_J_t}
\begin{aligned}
&\frac{\partial \det J}{\partial t} = \Sum_{\sigma} \sgn(\sigma) \Sum_{k=1}^n \frac{\partial J_{\sigma_k, k}}{\partial t} \Prod_{i=1, i\not = k}^n J_{\sigma_i, i} = \\&= -\Sum_{j=1}^n \Sum_{\sigma} \sgn(\sigma) \Sum_{k=1}^n \left[\frac{\partial J_{\sigma_k, k}}{\partial x_j} u_j + J_{\sigma_k,j} \frac{\partial u_j}{\partial x_k} \right] \Prod_{i=1, i\not = k}^n J_{\sigma_i, i}
\end{aligned}
\end{equation}

We will investigate two parts of \eqref{App_3_det_J_t}, corresponding to the first and the second terms inside the square brackets. For the first part we have
\begin{equation} \label{App_3_det_J_p1}
\begin{aligned}
-\Sum_{j=1}^n \Sum_{\sigma} \sgn(\sigma) \Sum_{k=1}^n \frac{\partial J_{\sigma_k, k}}{\partial x_j} u_j \Prod_{i=1, i\not = k}^n J_{\sigma_i, i} &= \\ = -\Sum_{j=1}^n \frac{\partial \det J}{\partial x_j} u_j &= -\frac{\partial \det J}{\partial x} u.
\end{aligned}
\end{equation}

The second part is a little bit more tricky:
\begin{equation*}
\footnotesize{
\begin{aligned}
&-\Sum_{j=1}^n \Sum_{\sigma} \sgn(\sigma) \Sum_{k=1}^n J_{\sigma_k,j} \frac{\partial u_j}{\partial x_k} \Prod_{i=1, i\not = k}^n J_{\sigma_i, i}  = -\det J \Sum_{j=1}^n \frac{\partial u_j}{\partial x_j} -{} \\ &{}-\Sum_{j=1}^n \Sum_{\sigma} \sgn(\sigma) \Sum_{k=1,k\not=j}^n J_{\sigma_k,j} \frac{\partial u_j}{\partial x_k} \Prod_{i=1, i\not = k}^n J_{\sigma_i, i}.
\end{aligned}}
\end{equation*}
Here we split the summation over $k$ into the term with $k=j$ and all other terms. The first one immediately gives the determinant multiplied by the divergence of the vector field. It appears that the sum over all other terms is zero. Indeed, imagine a permutation $\bar\sigma$ such that it is equal to $\sigma$ except $\sigma_j$ and $\sigma_k$ are swapped. Then the sign of $\bar\sigma$ is opposite to the sign of $\sigma$. Further, since the product $J_{\sigma_k,j} J_{\sigma_j,j}$ is the only way in which $\sigma_k$ and $\sigma_j$ enter the formula, the absolute value does not change with the change of permutation. Therefore for each $j,k$ and for each permutation there exists a permutation which cancels them out. 

Finally, substitution of the nonzero term of the last equation and \eqref{App_3_det_J_p1} into \eqref{App_3_det_J_t} leads to \eqref{App_3_detJ}.
\end{proof}

\begin{lemma}
Let $\partial x / \partial M$ be isotropic, i.e. represented by a scalar multiplied by a rotation matrix, and let $\rho~=~\det(\partial M/\partial x)$. Then
\begin{equation} \label{App_4}
\nabla\cdot\left(\rho\frac{\partial x}{\partial M}\right)\frac{\partial x}{\partial M}^T = 0.
\end{equation}
\end{lemma}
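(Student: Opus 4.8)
The plan is to recognize the matrix $\rho\,\partial x/\partial M$ as the adjugate of the compression tensor, after which the statement reduces to the classical Piola identity. Write $J := \partial M/\partial x$ for the compression tensor. Because $x(\cdot,t)$ and $M(\cdot,t)$ are mutually inverse maps, $\partial x/\partial M = J^{-1}$, and by definition $\rho = \det(\partial M/\partial x) = \det J$. Consequently
\begin{equation}
\rho\,\frac{\partial x}{\partial M} = \det(J)\,J^{-1} = \mathrm{adj}(J),
\end{equation}
the adjugate (transposed cofactor) matrix of $J$. Hence, to prove \eqref{App_4} it suffices to show that the row vector $\nabla\cdot\mathrm{adj}(J)$ vanishes: post-multiplying a zero row vector by $(\partial x/\partial M)^T$ leaves it zero. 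Notice that the isotropy hypothesis enters here only to guarantee that $J$ is invertible, so that $\partial x/\partial M$ is well defined; the vanishing of $\nabla\cdot\mathrm{adj}(J)$ itself holds for the Jacobian of any smooth map.

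Adopting the divergence convention of Proposition~1, for which $(\nabla\cdot B)_k = \sum_i \partial B_{ik}/\partial x_i$, the $k$-th entry of $\nabla\cdot\mathrm{adj}(J)$ equals $\sum_i \partial_{x_i}\mathrm{adj}(J)_{ik} = \sum_i \partial_{x_i}\mathrm{cof}(J)_{ki}$, where $\mathrm{cof}(J)_{ki}$ denotes the $(k,i)$ cofactor of $J$. This is exactly the Piola identity, stating that the row divergence of the cofactor matrix of a gradient field is zero. To establish it I would expand the cofactor in Levi-Civita form,
\begin{equation}
\mathrm{cof}(J)_{ki} = \frac{1}{(n-1)!}\,\epsilon_{k\alpha_2\cdots\alpha_n}\,\epsilon_{i\beta_2\cdots\beta_n}\,J_{\alpha_2\beta_2}\cdots J_{\alpha_n\beta_n},
\end{equation}
with $J_{\alpha\beta} = \partial M_\alpha/\partial x_\beta$ and summation over repeated indices understood.

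Differentiating by $x_i$ and summing over $i$, the product rule produces terms each carrying one factor $\partial_{x_i}J_{\alpha_p\beta_p} = \partial^2 M_{\alpha_p}/(\partial x_i\,\partial x_{\beta_p})$. This second derivative is symmetric under the exchange $i\leftrightarrow\beta_p$, whereas the symbol $\epsilon_{i\beta_2\cdots\beta_n}$, into which both $i$ and $\beta_p$ are contracted, is antisymmetric under that same exchange; contracting a symmetric tensor against an antisymmetric one gives zero, so every term, and hence the whole sum, vanishes. Equivalently, the cancellation can be phrased in the permutation-swapping language already used in the proof of Lemma~1: for the pair of $J$-factors tied together by the derivative, swapping the two indices flips the sign of the permutation while leaving the symmetric mixed partial unchanged, so the contributions cancel pairwise. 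The main obstacle is purely this index bookkeeping; once the cofactor is written in Levi-Civita form the symmetry-versus-antisymmetry cancellation is immediate, and combining it with the reduction of the first paragraph yields \eqref{App_4}.
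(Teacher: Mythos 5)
Your proof is correct, and it takes a genuinely different route from the paper's. The paper never leaves its isotropy-specific coordinates: it substitutes $\partial x/\partial M = \lambda^{-2}(\partial M/\partial x)^T$ and $\rho = \lambda^n$ as in \eqref{App_4_isotropy}, splits the divergence with \eqref{Prop_1_scalar}, and then derives the orthogonality relations \eqref{App_4_tech_1}--\eqref{App_4_tech_5} satisfied by the columns of $\partial M/\partial x$ in order to show that the surviving terms, all proportional to $\lambda^{n-3}\,\partial\lambda/\partial x$, cancel exactly. You instead identify $\rho\,\partial x/\partial M = \det(J)\,J^{-1} = \mathrm{adj}(J)$ with $J = \partial M/\partial x$, and reduce the claim to the Piola identity $\nabla\cdot\mathrm{adj}(J) = 0$, proved by the Levi-Civita expansion in which the symmetric mixed partial $\partial^2 M_{\alpha_p}/(\partial x_i\,\partial x_{\beta_p})$ is contracted against the antisymmetric symbol $\epsilon_{i\beta_2\cdots\beta_n}$ --- the same symmetric-versus-antisymmetric cancellation that powers the paper's proof of Lemma~1, as you note. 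Your route buys two things: brevity, and strictly greater generality, since it shows that \eqref{App_4} is a purely kinematic identity valid for any smooth invertible deformation, so the isotropy hypothesis (Assumption~1) plays no role in this lemma beyond well-definedness of $\partial x/\partial M$; the paper's computation, by contrast, stays inside the elementary matrix calculus it has already set up, but hides the fact that this hypothesis is superfluous here (isotropy is, of course, still essential elsewhere, namely for \eqref{Isotropy_tensor} and the scalar pressure \eqref{Euler_P}). One point deserves emphasis because the whole reduction hinges on it, and you handled it correctly: the divergence convention. Consistency of \eqref{Prop_1_scalar} forces the paper's $\nabla\cdot B$ to be the row vector with entries $(\nabla\cdot B)_k = \sum_i \partial B_{ik}/\partial x_i$, and this is precisely the pairing under which the divergence of $\mathrm{adj}(J)$ vanishes; under the transposed pairing the divergence of the adjugate of a general Jacobian need not vanish (in two dimensions, $M = (x_2^2,\, x_1)$ gives the nonzero row vector $(-2,\,0)$), so the Piola reduction would have collapsed had the convention gone the other way.
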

\begin{proof}
Define $\lambda = \lambda(\partial M/\partial x)$, thus $\rho = \lambda^n$. By isotropy
\begin{equation} \label{App_4_isotropy}
\frac{\partial x}{\partial M} = \lambda^{-2} \frac{\partial M}{\partial x}^T, 
\end{equation}
therefore the left-hand side of \eqref{App_4} is
\begin{equation} \label{App_4_Mx}
\begin{aligned}
&\nabla\cdot\left(\lambda^{n-2}\frac{\partial M}{\partial x}^T\right)\frac{\partial M}{\partial x} \lambda^{-2} = \\ &= \lambda^{n-4} \nabla\cdot\left(\frac{\partial M}{\partial x}^T\right)\frac{\partial M}{\partial x} + (n-2) \lambda^{n-3} \frac{\partial \lambda}{\partial x},
\end{aligned}
\end{equation}
where we used \eqref{Prop_1_scalar} and \eqref{App_4_isotropy}. Now let us investigate the first term more closely. Taking the divergence and looking at $j$-th element, we see that
\begin{equation} \label{App_4_Mx_div}
\left[\nabla\cdot\left(\frac{\partial M}{\partial x}^T\right)\frac{\partial M}{\partial x}\right]_j = \Sum_{k=1}^n \frac{\partial^2 M}{\partial x_k^2}^T \frac{\partial M}{\partial x_j}.
\end{equation} 
Now, By isotropy
\begin{equation} \label{App_4_tech_1}
\frac{\partial M}{\partial x_j}^T \frac{\partial M}{\partial x_k} = 0 \;\; \forall j \not= k, \quad \frac{\partial M}{\partial x_k}^T \frac{\partial M}{\partial x_k} = \lambda^2.
\end{equation}
Taking the derivative of the multiplication of basis vectors:
\begin{equation} \label{App_4_tech_2}
\frac{\partial}{\partial x_j} \left( \frac{\partial M}{\partial x_k}^T \frac{\partial M}{\partial x_k} \right) = 2 \frac{\partial^2 M}{\partial x_j \partial x_k}^T \frac{\partial M}{\partial x_k},
\end{equation}
but at the same time the value under the derivative is $\lambda^2$ by \eqref{App_4_tech_1}, therefore
\begin{equation} \label{App_4_tech_3}
\frac{\partial}{\partial x_j} \left( \frac{\partial M}{\partial x_k}^T \frac{\partial M}{\partial x_k} \right) = \frac{\partial \lambda^2}{\partial x_j} = 2 \lambda \frac{\partial \lambda}{\partial x_j}.
\end{equation}
Then, taking the derivative of multiplication of different basis vectors with $j \not= k$, by \eqref{App_4_tech_1} we obtain zero:
\begin{equation*}
\frac{\partial}{\partial x_k} \left( \frac{\partial M}{\partial x_j}^T \frac{\partial M}{\partial x_k} \right) = \frac{\partial^2 M}{\partial x_j \partial x_k}^T \frac{\partial M}{\partial x_k} + \frac{\partial M}{\partial x_j}^T \frac{\partial^2 M}{\partial x_k^2} = 0,
\end{equation*}
which by equality of \eqref{App_4_tech_2} and \eqref{App_4_tech_3} means that for $j \not= k$
\begin{equation} \label{App_4_tech_4}
\frac{\partial M}{\partial x_j}^T \frac{\partial^2 M}{\partial x_k^2} = -\frac{\partial^2 M}{\partial x_j \partial x_k}^T \frac{\partial M}{\partial x_k} = -\lambda \frac{\partial \lambda}{\partial x_j}.
\end{equation}
In the case of $j=k$ by equality of \eqref{App_4_tech_2} and \eqref{App_4_tech_3} we have
\begin{equation} \label{App_4_tech_5}
\frac{\partial^2 M}{\partial x_j^2}^T \frac{\partial M}{\partial x_j} = \lambda \frac{\partial \lambda}{\partial x_j}.
\end{equation}

Combination of \eqref{App_4_tech_4} and \eqref{App_4_tech_5} means that \eqref{App_4_Mx_div} is
\begin{equation} \label{App_4_Mx_div_2}
\left[\nabla\cdot\left(\frac{\partial M}{\partial x}^T\right)\frac{\partial M}{\partial x}\right]_j = (2-n) \lambda \frac{\partial \lambda}{\partial x_j}.
\end{equation} 
Finally, substituting \eqref{App_4_Mx_div_2} in \eqref{App_4_Mx} gives zero.
\end{proof}

% use section* for acknowledgment
\section*{Acknowledgment}

The Scale-FreeBack project has received funding from the European Research Council (ERC) under the European Union’s Horizon 2020 research and innovation programme (grant agreement N 694209).

% Can use something like this to put references on a page
% by themselves when using endfloat and the captionsoff option.
\ifCLASSOPTIONcaptionsoff
  \newpage
\fi

% trigger a \newpage just before the given reference
% number - used to balance the columns on the last page
% adjust value as needed - may need to be readjusted if
% the document is modified later
%\IEEEtriggeratref{8}
% The "triggered" command can be changed if desired:
%\IEEEtriggercmd{\enlargethispage{-5in}}

% references section

% can use a bibliography generated by BibTeX as a .bbl file
% BibTeX documentation can be easily obtained at:
% http://mirror.ctan.org/biblio/bibtex/contrib/doc/
% The IEEEtran BibTeX style support page is at:
% http://www.michaelshell.org/tex/ieeetran/bibtex/
%\bibliographystyle{IEEEtran}
% argument is your BibTeX string definitions and bibliography database(s)
%\bibliography{IEEEabrv,../bib/paper}
%
% <OR> manually copy in the resultant .bbl file
% set second argument of \begin to the number of references
% (used to reserve space for the reference number labels box)

% biography section
% 
% If you have an EPS/PDF photo (graphicx package needed) extra braces are
% needed around the contents of the optional argument to biography to prevent
% the LaTeX parser from getting confused when it sees the complicated
% \includegraphics command within an optional argument. (You could create
% your own custom macro containing the \includegraphics command to make things
% simpler here.)
%\begin{IEEEbiography}[{\includegraphics[width=1in,height=1.25in,clip,keepaspectratio]{mshell}}]{Michael Shell}
% or if you just want to reserve a space for a photo:

%\vskip -30pt

\begin{IEEEbiography}[{\includegraphics[width=1in,height=1.25in,clip,keepaspectratio]{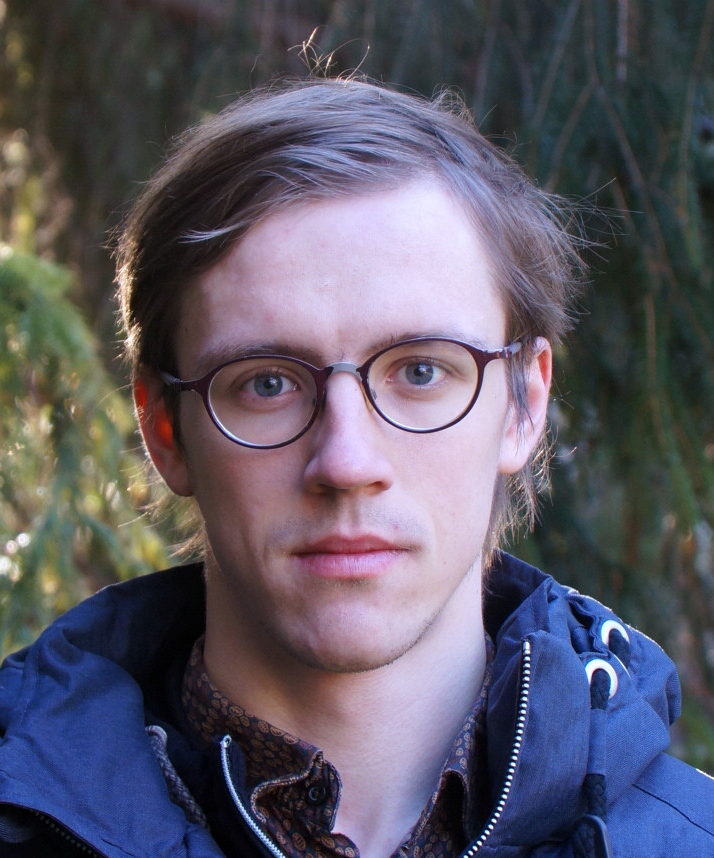}}]
{Denis Nikitin}
received both the B.Sc. and the M.Sc. degree in Mathematics and Mechanics Faculty of Saint Petersburg State University, Saint Petersburg, Russia, specializing on the control theory and cybernetics. He won several international robotics competitions while being student and was a teacher of robotics in the Math{\&}Phys Lyceum 239 in Saint Petersburg.

He is currently a doctoral researcher at CNRS, GIPSA-Lab, Grenoble, France. His current research mainly focuses on control of large-scale systems.
\end{IEEEbiography}

%\vskip -30pt

% if you will not have a photo at all:
\begin{IEEEbiography}[{\includegraphics[width=1in,height=1.25in,clip,keepaspectratio]{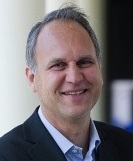}}]{Carlos Canudas-de-Wit}
(F'16) was born in Villahermosa, Mexico, in 1958. He received the B.S. degree in electronics and communications from the Monterrey Institute of Technology and Higher Education, Monterrey, Mexico, in 1980, and the M.S. and Ph.D. degrees in automatic control from the Department of Automatic Control, Grenoble Institute of Technology, Grenoble, France, in 1984 and 1987, respectively. He is currently a Directeur de recherche (Senior Researcher) with CNRS, Grenoble, where he is the Leader of the NeCS Team, a joint team of GIPSA-Lab (CNRS) and INRIA, on networked controlled systems. 

Dr. Canudas-de-Wit is an IFAC Fellow. He was an Associate Editor of the IEEE Transactions on Automatic Control, the Automatica, the IEEE Transactions on Control Systems Technology. He is an Associate Editor of the Asian Journal of Control, and the IEEE Transactions on Control of Network Systems. He served as the President of the European Control Association from 2013 to 2015, and a member of the IEEE Board of Governors of the Control System Society from 2011 to 2014. He holds the ERC Advanced Grant Scale-FreeBack from 2016 to 2021. 
\end{IEEEbiography}

% insert where needed to balance the two columns on the last page with
% biographies
%\newpage

%\vskip -30pt

\begin{IEEEbiography}[{\includegraphics[width=1in,height=1.25in,clip,keepaspectratio]{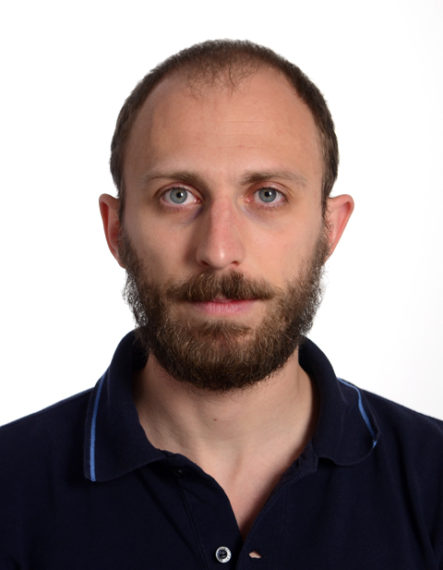}}]{Paolo Frasca}
(M’13–SM’18) received the Ph.D. degree from Politecnico di Torino, Turin, Italy, in 2009. After Postdoctoral appointments with the CNR-IAC, Rome, and in Torino, he has been an Assistant Professor with the University of Twente, Enschede, The Netherlands, from 2013 to 2016. Since October 2016, he has been a CNRS Researcher with GIPSA-lab, Grenoble, France. 

His research interests include the theory of network systems and cyber-physical systems, with applications to robotics, sensors, infrastructural, and social networks. On these topics, he has (co)authored more than 60 journal and conference papers and the book Introduction to Averaging Dynamics Over Networks (Springer). He has been an Associate Editor for the Editorial Boards of several conferences and journals, including the IEEE Control Systems Letters.
\end{IEEEbiography}

% You can push biographies down or up by placing
% a \vfill before or after them. The appropriate
% use of \vfill depends on what kind of text is
% on the last page and whether or not the columns
% are being equalized.

%\vfill

% Can be used to pull up biographies so that the bottom of the last one
% is flush with the other column.
%\enlargethispage{-5in}

% that's all folks
\end{document}